\documentclass{amsart}
\usepackage{amsmath,amssymb,amsthm}
\usepackage{tkz-graph}

\newcommand{\re}{\mathbb{R}}
\newcommand{\bc}{\mathbb{C}}
\newcommand{\bh}{\mathbb{H}}
\newcommand{\zz}{\mathbb{Z}}
\newcommand{\zt}{\zz_2{}}
\newcommand{\gf}{\mathbb{F}}
\DeclareMathOperator{\Cl}{Cl}
\DeclareMathOperator{\wt}{wt}
\newtheorem{thm}{Theorem}

\newtheorem{proposition}{Proposition}

\begin{document}
\title[Adinkras: Graphs of Clifford Algebra Representations]{Adinkras: Graphs of Clifford Algebra Representations, Supersymmetry, and Codes}
\author[K. Iga]{Kevin Iga}
\date{September 2021}
\address{
Pepperdine University\\
Natural Science Division\\
Malibu CA 90263-4321\\
USA}
\email{kiga@pepperdine.edu}
\maketitle

\begin{abstract}
An Adinkra is a graph from the study of supersymmetry in particle physics, but it can be adapted to study Clifford algebra representations.  The graph in this context is called a Cliffordinkra, and puts some standard ideas in Clifford algebra representations in a geometric and visual context.

In the past few years there have been developments in Adinkras that have shown how they are connected to error correcting codes, algebraic topology, algebraic geometry, and combinatorics.  These connections also arise for Cliffordinkras.

This paper introduces Cliffordinkras and describes the relationship to these subjects in that context.  No previous knowledge of Adinkras and supersymmetry is assumed.
\end{abstract}

\section{Introduction}
An {\em Adinkra} is a directed graph, together with a coloring of edges, with some edge being dashed and others solid, satisfying various conditions which will be described later.  A {\em Cliffordinkra} is analogous, except the edges are undirected.  See Figure~\ref{fig:adinkraex}.

\begin{figure}[h]
\begin{center}
\begin{tikzpicture}[scale=0.15]
\GraphInit[vstyle=Welsh] 
\SetVertexSimple[MinSize=5pt]  
\SetUpEdge[labelstyle={draw},style={ultra thick,->}]
\tikzset{Dash/.style = {ultra thick, ->, dashed,draw}}
\Vertex[x=0,y=5]{A}
\Vertex[x=0,y=10]{B}
\Vertex[x=0,y=20]{C}
\Vertex[x=10,y=10]{D}
\Vertex[x=-5,y=10]{E}
\Vertex[x=5,y=10]{F}
\Vertex[x=-5,y=20]{G}
\Vertex[x=5,y=20]{H}
\AddVertexColor{white}{A,C,G,H}
\Edge[color=red](B)(G)
\Edge[color=red](D)(H)
\Edge[color=red](F)(C)
\Edge[color=red](A)(E)
\Edge[color=blue](D)(G)
\Edge[color=blue, style=Dash](B)(H)
\Edge[color=blue](E)(C)
\Edge[color=blue, style=Dash](A)(F)
\Edge(D)(C)
\Edge[style=Dash](F)(H)
\Edge[style=Dash](E)(G)
\Edge(A)(B)
\end{tikzpicture}
\makebox[.5in]{}
\begin{tikzpicture}[scale=0.15]
\GraphInit[vstyle=Welsh] 
\SetVertexSimple[MinSize=5pt]  
\SetUpEdge[labelstyle={draw},style={ultra thick}]
\tikzset{Dash/.style = {ultra thick, dashed,draw}}
\Vertex[x=0,y=5]{A}
\Vertex[x=0,y=10]{B}
\Vertex[x=0,y=20]{C}
\Vertex[x=10,y=10]{D}
\Vertex[x=-5,y=10]{E}
\Vertex[x=5,y=10]{F}
\Vertex[x=-5,y=20]{G}
\Vertex[x=5,y=20]{H}
\AddVertexColor{white}{A,C,G,H}
\Edge[color=red](B)(G)
\Edge[color=red](D)(H)
\Edge[color=red](F)(C)
\Edge[color=red](A)(E)
\Edge[color=blue](D)(G)
\Edge[color=blue, style=Dash](B)(H)
\Edge[color=blue](E)(C)
\Edge[color=blue, style=Dash](A)(F)
\Edge(D)(C)
\Edge[style=Dash](F)(H)
\Edge[style=Dash](E)(G)
\Edge(A)(B)
\end{tikzpicture}
\end{center}
\caption{An example of an Adinkra (left) and a Cliffordinkra (right).
\label{fig:adinkraex}}
\end{figure}
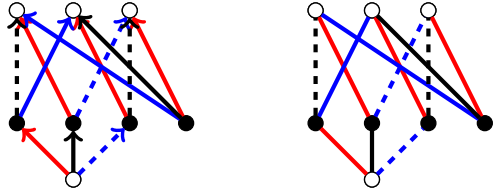

Adinkras were first introduced to understand supersymmetry, a concept in particle physics, and as such, record information on certain kinds of representations of the supersymmetry algebra.\cite{rA}  A Cliffordinkra likewise records information about the representation of a Clifford algebra.\cite{Doran:2008rp,r6--1}  Even though Adinkras were developed first, Cliffordinkras are a bit simpler, and the relationship to the Clifford algebra is more natural and straightforward.

Cliffordinkras provide a new graph-theoretic perspective on representations of Clifford algebras.  They allow for a new kind of geometric intuition in understanding Clifford algebras, and highlight the combinatorial structure.  Along the way, this idea leads to a classification of sets of signed permutation matrices $\Gamma_1,\ldots,\Gamma_n$ that satisfy the Clifford algebra relations.  In addition, it provides a framework for constructing concrete sets of such matrices.

It turns out that the mathematics behind Adinkras goes well beyond graph theory and includes error correcting codes,\cite{r6-codes,at0,at,r6-8} cubical cohomology,\cite{cc} and Riemann surfaces.\cite{geom1,geom2}  These relationships to other subjects do not depend on the orientation of the edges, so the same is true for Cliffordinkras.

This paper reports on these developments in the context of Cliffordinkras and Clifford algebras.  No previous familiarity with Adinkras or supersymmetry is assumed, and indeed, the main point of this paper is that these concepts can be introduced in the Clifford algebra context more naturally than in the original supersymmetry context.

\section{Cliffordinkras}
\label{sec:cliffordinkra}
A {\bf Cliffordinkra} with $n$ colors is an undirected graph with the following decorations:
\begin{itemize}
\item Bipartite: Every vertex is either a boson (drawn with an open circle) or a fermion (drawn with a filled circle), and every edge connects a boson with a fermion.
\item Edge colors: Every edge is colored one of $n$ colors.
\item Dashing: Every edge is either drawn as solid or dashed.
\end{itemize}
These decorations satisfy the following conditions:
\begin{itemize}
\item Regular edge-colored: Every vertex is incident to exactly one edge of each color.
\item Quadrilateral: Given any two different colors, the subgraph consisting of edges of those two colors is a disjoint union of cycles of length 4.  Each such cycle involving two colors is called a bicolor cycle, and this requirement says that bicolor cycles are all length 4.
\item Totally odd dashing: Every bicolor cycle has an odd number of dashed edges.
\end{itemize}

An example of a Cliffordinkra with 4 colors is in Figure~\ref{fig:cliffordinkraex}.  We are using the colors black, red, green, and blue.  The bosons are E, F, G, and H, and the fermions are A, B, C, and D.

\begin{figure}[h]
\begin{center}
\begin{tikzpicture}[scale=0.15]
\GraphInit[vstyle=Welsh] 
\SetVertexNormal[MinSize=5pt]  
\SetUpEdge[labelstyle={draw},style={ultra thick}]
\tikzset{Dash/.style = {ultra thick,dashed,draw}}
\Vertex[x=-10,y=0,Lpos=270,L={E}]{G}
\Vertex[x=0,y=0,Lpos=270,L={F}]{A}
\Vertex[x=10,y=0,Lpos=270,L={G}]{H}
\Vertex[x=20,y=0,Lpos=270,L={H}]{C}
\Vertex[x=-10,y=10,Lpos=90,L={A}]{E}
\Vertex[x=0,y=10,Lpos=90]{B}
\Vertex[x=10,y=10,Lpos=90,L={C}]{F}
\Vertex[x=20,y=10,Lpos=90]{D}
\AddVertexColor{black}{B,F,D,E}
\Edge[color=red](B)(G)
\Edge[color=red](D)(H)
\Edge[color=red](F)(C)
\Edge[color=red](A)(E)
\Edge[color=blue](D)(G)
\Edge[color=blue, style=Dash](B)(H)
\Edge[color=blue](E)(C)
\Edge[color=blue, style=Dash](A)(F)
\Edge(D)(C)
\Edge[style=Dash](F)(H)
\Edge[style=Dash](E)(G)
\Edge(A)(B)
\Edge[color=green](G)(F)
\Edge[color=green,style=Dash](H)(E)
\Edge[color=green](A)(D)
\Edge[color=green,style=Dash](C)(B)
\end{tikzpicture}
\end{center}
\caption{An example of a Cliffordinkra with 4 colors.
\label{fig:cliffordinkraex}}
\end{figure}
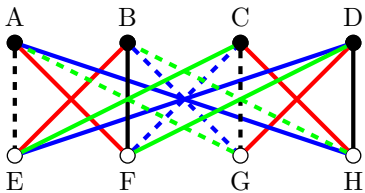

Vertex A is incident to a black dashed edge that connects to E, a red solid edge going to F, a green dashed edge going to G, and a blue solid edge going to H.  These are the four colors, each used exactly once.  Likewise there are four edges incident to B, one of each color, and so on for each vertex in the graph.  Thus, this graph is {\em regular edge-colored}.

If we take the subgraph with only the black and red edges, the result is Figure~\ref{fig:bicolorex}.  This is a disjoint union of two cycles: one with vertices A, F, B, E, in that cyclic order; the other cycle with vertices C, H, D, G.  These are both cycles of length 4, which demonstrates the {\em quadrilateral} property.  They both have 3 solid edges and one dashed edge: since the number of dashed edges is odd, this demonstrates the {\em totally odd dashing} property.  The same statements are true for any pair of colors.  For instance, if we choose the colors blue and green, the subgraph is shown in Figure~\ref{fig:bicolorex2}, and the reader can verify that these same properties hold in that case, as well.

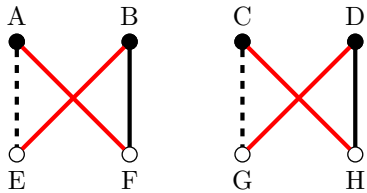
\begin{figure}[h]
\begin{center}
\begin{tikzpicture}[scale=0.15]
\GraphInit[vstyle=Welsh] 
\SetVertexNormal[MinSize=5pt]  
\SetUpEdge[labelstyle={draw},style={ultra thick}]
\tikzset{Dash/.style = {ultra thick,dashed,draw}}
\Vertex[x=-10,y=0,Lpos=270,L={E}]{G}
\Vertex[x=0,y=0,Lpos=270,L={F}]{A}
\Vertex[x=10,y=0,Lpos=270,L={G}]{H}
\Vertex[x=20,y=0,Lpos=270,L={H}]{C}
\Vertex[x=-10,y=10,Lpos=90,L={A}]{E}
\Vertex[x=0,y=10,Lpos=90]{B}
\Vertex[x=10,y=10,Lpos=90,L={C}]{F}
\Vertex[x=20,y=10,Lpos=90]{D}
\AddVertexColor{black}{B,F,D,E}
\Edge[color=red](B)(G)
\Edge[color=red](D)(H)
\Edge[color=red](F)(C)
\Edge[color=red](A)(E)
\Edge(D)(C)
\Edge[style=Dash](F)(H)
\Edge[style=Dash](E)(G)
\Edge(A)(B)
\end{tikzpicture}
\end{center}
\caption{The subgraph consisting of the black and red edges.
\label{fig:bicolorex}}
\end{figure}

\begin{figure}[h]
\begin{center}
\begin{tikzpicture}[scale=0.15]
\GraphInit[vstyle=Welsh] 
\SetVertexNormal[MinSize=5pt]  
\SetUpEdge[labelstyle={draw},style={ultra thick}]
\tikzset{Dash/.style = {ultra thick,dashed,draw}}
\Vertex[x=-10,y=0,Lpos=270,L={E}]{G}
\Vertex[x=0,y=0,Lpos=270,L={F}]{A}
\Vertex[x=10,y=0,Lpos=270,L={G}]{H}
\Vertex[x=20,y=0,Lpos=270,L={H}]{C}
\Vertex[x=-10,y=10,Lpos=90,L={A}]{E}
\Vertex[x=0,y=10,Lpos=90]{B}
\Vertex[x=10,y=10,Lpos=90,L={C}]{F}
\Vertex[x=20,y=10,Lpos=90]{D}
\AddVertexColor{black}{B,F,D,E}
\Edge[color=blue](D)(G)
\Edge[color=blue, style=Dash](B)(H)
\Edge[color=blue](E)(C)
\Edge[color=blue, style=Dash](A)(F)
\Edge[color=green](G)(F)
\Edge[color=green,style=Dash](H)(E)
\Edge[color=green](A)(D)
\Edge[color=green,style=Dash](C)(B)
\end{tikzpicture}
\end{center}
\caption{The subgraph consisting of the blue and green edges.
\label{fig:bicolorex2}}
\end{figure}
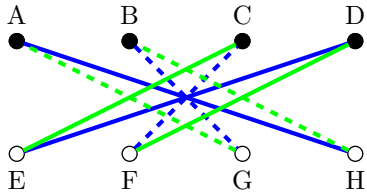

\subsection{Some Examples of Cliffordinkras}
\label{sec:cliffordinkraex}
We begin with a few examples of Cliffordinkras for $n=1$, $2$, and $3$.  For $n=1$, there is only one color, and every vertex must have only one edge incident to it.  Therefore, a Cliffordinkra with $n=1$ color is a disjoint union of line segments, as in the following example:

\begin{center}
\begin{tikzpicture}[scale=0.15]
\GraphInit[vstyle=Welsh]
\SetVertexNormal[MinSize=5pt]
\SetUpEdge[labelstyle={draw},style={ultra thick}]
\tikzset{Dash/.style={ultra thick,dashed,draw}}
\Vertex[x=0,y=0,Math,L={B_1},Math,Lpos=180]{B1}
\Vertex[x=20,y=0,Math,L={F_1},Math,Lpos=0]{F1}
\Vertex[x=0,y=-5,Math,L={B_2},Math,Lpos=180]{B2}
\Vertex[x=20,y=-5,Math,L={F_2},Math,Lpos=0]{F2}
\Vertex[x=0,y=-10,Math,L={B_3},Math,Lpos=180]{B3}
\Vertex[x=20,y=-10,Math,L={F_3},Math,Lpos=0]{F3}
\AddVertexColor{black}{F1,F2,F3}
\Edge(B1)(F1)
\Edge(B2)(F2)
\Edge[style=Dash](B3)(F3)
\end{tikzpicture}
\end{center}

More generally, the disjoint union of two Cliffordinkras with $n$ colors is a Cliffordinkra with $n$ colors, so it suffices to study connected Cliffordinkras.  There are two such connected Cliffordinkras with $n=1$ colors: a solid edge, like the one connecting $B_1$ and $F_1$ above, and a dashed edge, like the one connecting $B_3$ and $F_3$.

We now consider Cliffordinkras with $n=2$ colors (say, black and red).  Every vertex is incident to one black and one red edge.  By the Quadrilateral criterion in the definition of Cliffordinkras, we know that connected Cliffordinkras with 2 colors must be a cycle of order 4.  For instance:
\begin{center}
\begin{tikzpicture}[scale=0.15]
\GraphInit[vstyle=Welsh]
\SetVertexNormal[MinSize=5pt]
\SetUpEdge[labelstyle={draw},style={ultra thick}]
\tikzset{Dash/.style={dashed,draw,ultra thick}}
\Vertex[x=0,y=0,Math,L={B},Lpos=180]{B}
\Vertex[x=20,y=0,Math,L={F},Lpos=0]{F}
\Vertex[x=0,y=-20,Math,L={C},Lpos=180]{C}
\Vertex[x=20,y=-20,Math,L={G},Lpos=0]{G}
\AddVertexColor{black}{F,C}
\Edge(B)(F)
\Edge(C)(G)
\Edge[color=red](B)(C)
\Edge[color=red,style=Dash](F)(G)
\end{tikzpicture}
\end{center}
Note the dashed red edge between $F$ and $G$.  By the Totally odd dashing criterion in the definition of Cliffordinkra, we know that this cycle must have either one or three dashed edges.

Below is an example of a Cliffordinkra with $n=3$ colors (black, red, and green).  This is a 3-dimensional cube, or more precisely its 1-skeleton (its vertices and edges).  Black edges are all horizontal, red edges are vertical, and blue edges go ``into the page''.

The bicolor cycles are the square faces of the cube.  The top face, for instance, is a blue-black bicolor cycle.  Each face indeed is a cycle of length 4, and each has an odd number of dashed edges.

\begin{center}
\begin{tikzpicture}[scale=0.15]
\GraphInit[vstyle=Welsh]
\SetVertexNormal[MinSize=5pt]
\SetUpEdge[labelstyle={draw},style={ultra thick}]
\tikzset{Dash/.style={dashed,draw,ultra thick}}
\Vertex[x=0,y=0,Math,L={B},Lpos=180]{B}
\Vertex[x=20,y=0,Math,L={F},Lpos=0]{F}
\Vertex[x=0,y=-20,Math,L={C},Lpos=180]{C}
\Vertex[x=20,y=-20,Math,L={G},Lpos=0]{G}
\Vertex[x=5,y=5,Math,L={A},Lpos=180]{A}
\Vertex[x=25,y=5,Math,L={E},Lpos=0]{E}
\Vertex[x=5,y=-15,Math,L={D},Lpos=180]{D}
\Vertex[x=25,y=-15,Math,L={H},Lpos=0]{H}
\AddVertexColor{black}{F,C,A,H}
\Edge(A)(E)
\Edge(D)(H)
\Edge[color=red](A)(D)
\Edge[color=red,style=Dash](E)(H)
\Edge[color=blue](A)(B)
\Edge[color=blue,style=Dash](E)(F)
\Edge[color=blue,style=Dash](D)(C)
\Edge[color=blue](G)(H)
\Edge(B)(F)
\Edge(C)(G)
\Edge[color=red](B)(C)
\Edge[color=red,style=Dash](F)(G)
\end{tikzpicture}
\end{center}

In fact, every connected Cliffordinkra with $n=3$ colors is the $1$-skeleton of a $3$-dimensional cube, though there are many ways to assign a dashing that is totally odd.  This will be generalized in Section~\ref{sec:cubes}.

\subsection{Vertex switching}
\label{sec:vertexswitch}
Suppose we have a Cliffordinkra with $n$ colors, and let $V$ be the set of vertices and $E$ its set of edges.  Given a vertex $v\in V$ we can do a {\em vertex switch} on $v$ as follows.  For every edge incident to $v$, if it is solid, make it dashed.  If it is dashed, make it solid.  Figure~\ref{fig:vswitch} illustrates this.
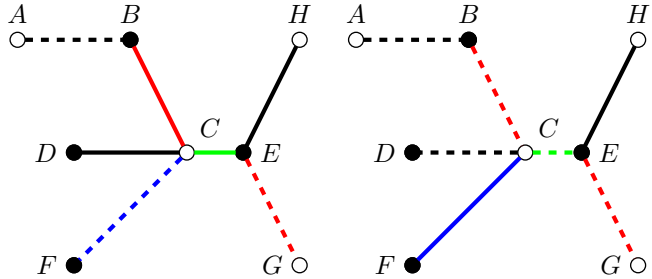
\begin{figure}[h]
\begin{center}
\begin{tikzpicture}[scale=0.15]
\GraphInit[vstyle=Welsh]
\SetVertexNormal[MinSize=5pt]
\SetUpEdge[labelstyle={draw},style={ultra thick}]
\tikzset{Dash/.style={dashed,draw,ultra thick}}
\Vertex[x=0,y=20,Math,L={A},Lpos=90]{A}
\Vertex[x=10,y=20,Math,L={B},Lpos=90]{B}
\Vertex[x=15,y=10,Math,L={C},Lpos=60]{C}
\Vertex[x=5,y=10,Math,L={D},Lpos=180]{D}
\Vertex[x=20,y=10,Math,L={E},Lpos=0]{E}
\Vertex[x=5,y=0,Math,L={F},Lpos=180]{F}
\Vertex[x=25,y=0,Math,L={G},Lpos=180]{G}
\Vertex[x=25,y=20,Math,L={H},Lpos=90]{H}
\AddVertexColor{black}{B,D,E,F}
\Edge[style=Dash](A)(B)
\Edge[color=red](B)(C)
\Edge(C)(D)
\Edge[color=green](C)(E)
\Edge[color=blue,style=Dash](C)(F)
\Edge[color=red,style=Dash](E)(G)
\Edge[color=black](E)(H)

\Vertex[x=30,y=20,Math,L={A},Lpos=90]{A2}
\Vertex[x=40,y=20,Math,L={B},Lpos=90]{B2}
\Vertex[x=45,y=10,Math,L={C},Lpos=60]{C2}
\Vertex[x=35,y=10,Math,L={D},Lpos=180]{D2}
\Vertex[x=50,y=10,Math,L={E},Lpos=0]{E2}
\Vertex[x=35,y=0,Math,L={F},Lpos=180]{F2}
\Vertex[x=55,y=0,Math,L={G},Lpos=180]{G2}
\Vertex[x=55,y=20,Math,L={H},Lpos=90]{H2}
\AddVertexColor{black}{B2,D2,E2,F2}
\Edge[style=Dash](A2)(B2)
\Edge[color=red,style=Dash](B2)(C2)
\Edge[style=Dash](C2)(D2)
\Edge[color=green,style=Dash](C2)(E2)
\Edge[color=blue](C2)(F2)
\Edge[color=red,style=Dash](E2)(G2)
\Edge[color=black](E2)(H2)

\end{tikzpicture}
\end{center}
\caption{Before and after the vertex switch at vertex $C$.  Note that only a portion of the Cliffordinkra has been drawn here.}
\label{fig:vswitch}
\end{figure}

\begin{proposition}
A vertex switch on a Cliffordinkra results in another Cliffordinkra.
\end{proposition}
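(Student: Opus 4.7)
The plan is to verify each of the defining properties of a Cliffordinkra in turn, and observe that the vertex switch only affects the dashing decoration, so most properties are preserved trivially.

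First, note that a vertex switch does not change the underlying graph, the bipartition into bosons and fermions, the edge coloring, or which edges are incident to which vertices. Therefore the bipartite property, the edge-color decoration, the regular edge-colored property, and the quadrilateral property all carry over unchanged from the original Cliffordinkra. The only property that requires verification is the totally odd dashing condition.

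For totally odd dashing, let $v$ be the switched vertex and let $C$ be any bicolor cycle, involving colors $c_1$ and $c_2$ say. There are two cases. If $v$ is not a vertex of $C$, then none of the four edges of $C$ is incident to $v$, so none has its dashing flipped, and the number of dashed edges in $C$ is unchanged and hence still odd. If $v$ is a vertex of $C$, then because the Cliffordinkra is regular edge-colored, $v$ is incident to exactly one edge of color $c_1$ and exactly one edge of color $c_2$; both of these edges lie in $C$, and these are the only edges of $C$ incident to $v$. The vertex switch therefore flips the dashing of exactly two of the four edges of $C$, which changes the number of dashed edges by an even number, preserving its parity. In both cases the number of dashed edges in $C$ remains odd.

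Since the vertex switch preserves every defining property, the result is again a Cliffordinkra. The main (and only) observation is the counting step in the second case: that each bicolor cycle through $v$ contains exactly two edges at $v$, forcing the parity change to be even. I do not anticipate any obstacle beyond this elementary parity argument.
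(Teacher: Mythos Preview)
Your proof is correct and follows essentially the same approach as the paper: reduce to checking the totally odd dashing condition, split into the cases where the bicolor cycle does or does not contain $v$, and observe that in the latter case exactly two edges of the cycle are flipped so the parity is preserved. Your write-up is in fact slightly more explicit than the paper's in justifying (via the regular edge-colored property) why precisely two edges of the cycle are incident to $v$.
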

\begin{proof}
Suppose we do a vertex switch at a vertex $v$.  The only thing that needs to be checked is that the Totally odd condition holds.  Consider a bicolor cycle.  It either involves $v$ or it does not.  If it does not involve $v$ then the dashing on the edges of the bicolor cycle are not affected.  If it does involve $v$ then two edges in the bicolor cycle will change their dashing.  The number of dashed edges in the bicolor cycle by increase by 2, decrease by 2, or remain the same, but in any case it will remain odd.
\end{proof}

A vertex switch can be done on many vertices, one after another.  These operations commute, and the result is as follows: let $S$ be a set of vertices.  Given an edge $e$, we change the dashing on $e$ if and only if $e$ connects a vertex in $S$ with a vertex in the complement of $S$.

The reader can check that this is indeed the effect of doing a vertex switch on each of the elements of $S$.

\subsection{Equivalence of Cliffordinkras}
If $(V_1,E_1)$ is the set of vertices and the set of edges of one graph, and $(V_2,E_2)$ is the set of vertices and the set of edges for another graph, a {\em graph isomorphism} from the first graph to the second graph is a bijection $\phi_V:V_1\to V_2$ and a bijection $\phi_E:E_1\to E_2$ so that if $e$ is an edge in $E_1$ connecting vertices $v$ and $w$ in $V_1$, then $\phi_E(e)$ is an edge in $E_2$ connecting vertices $\phi_V(v)$ and $\phi_V(w)$ in $V_2$.  This much is standard terminology.

If the vertices are bipartitioned into bosons and fermions, and the edges are colored with various colors, we can furthermore insist that these bijections respect these features.  To wit, if $v$ is a boson, then $\phi_V(v)$ is a boson, and if $v$ is a fermion, then $\phi_V(v)$ is a fermion.  Likewise the edge $e\in E_1$ and $\phi_E(e)\in E_2$ have the same color.

When it comes to dashing, however, we will take a different approach to include vertex switching.  So a Cliffordinkra isomorphism is a graph isomorphism that sends bosons to bosons and fermions to fermions, that sends edges to edges of the same color, and so that there exists a set $S\subset V_1$ of vertices so that for every edge $e\in E_1$, $\phi_E(e)$ has the same dashing as $e$ if and only if $e$ connects a vertex in $S$ to a vertex not in $S$.

\section{Clifford algebra representations from Cliffordinkras}
\label{sec:cliffordinkratorep}
In this section we will show how a Cliffordinkra with $n$ colors gives rise to a real $\zt$-graded representation of a Clifford algebra, more specifically, the Clifford algebra $\Cl(0,n)$.

Suppose we have a Cliffordinkra with $n$ colors.  Let $V$ be the free vector space spanned by the vertices of the graph; i.e., $V$ is the set of formal expressions of the form
\[a_1v_1+\cdots+a_kv_k\]
where $a_1, \ldots, a_k\in\re$ and $v_1,\ldots,v_k$ are vertices of the graph.

We now define linear functions $\Gamma_1,\ldots,\Gamma_n:V\to V$ as follows.  
If $i$ is a color, and $v$ and $w$ are connected by an edge of color $i$, then $\Gamma_i(v)=w$ if the edge is solid, and $\Gamma_i(v)=-w$ if it is dashed.

Note by the Regular edge-colored property that $\Gamma_i$ is well-defined.  These extends uniquely to linear functions $\Gamma_1,\ldots,\Gamma_n$.  If we write $V^0$ for the vector space spanned by the bosons and $V^1$ for the vector space spanned by the fermions, so that $V=V^0\oplus V^1$, we see that $\Gamma_i$ maps $V^0$ to $V^1$ and vice-versa.

By definition,
\[\Gamma_i\Gamma_i=I\]
where $I$ is the identity.  The Quadrilateral and Totally odd dashing properties show that if $i\not=j$,
\[\Gamma_i\Gamma_j=-\Gamma_j\Gamma_i\]
These properties, put together, can be written as an anticommutation relation
\begin{equation}
\{\Gamma_i,\Gamma_j\}=2\delta_{ij}I
\label{eqn:clifford}
\end{equation}
where $\delta_{ij}$ is the Kronecker delta symbol.

The $\Gamma_1,\ldots,\Gamma_n$ thus furnish a real $\zt$-graded representation of the Clifford algebra $\Cl(0,n)$.

We can actually get $\Gamma_i$ to be matrices if we number the vertices of the Cliffordinkra.  Then we can view $V$ as $\mathbb{R}^{\#\{\mbox{vertices}\}}$ where the $j$th vertex is associated with the standard basis element $e_j$.  Then the $(j,k)$ entry for $\Gamma_i$ is:
\begin{equation}
(\Gamma_i)_{j,k}=\begin{cases}
1,&\mbox{if there is a solid edge of color $i$ from vertex $j$ to vertex $k$}\\
-1,&\mbox{if there is a dashed edge of color $i$ from vertex $j$ to vertex $k$}\\
0,&\mbox{if there is no edge of color $i$ from vertex $j$ to vertex $k$.}
\end{cases}
\label{eqn:gammapm1}
\end{equation}
These give us matrices $\Gamma_1,\ldots,\Gamma_n$, again satisfying he Clifford algebra relations for $\Cl(0,n)$.  The entries are real numbers and in fact are $0$, $1$, and $-1$, and satisfy a further property: they are signed permutation matrices.  We define and explore signed permutation matrices in the next section.

\section{$\zt$-graded signed permutation representations}
\label{sec:signedperm}
A permutation on $m$ elements is a bijection on a set of $m$ elements.  If we use as this set the standard basis $e_1,\ldots,e_m$ in $\re^m$, and extend this map linearly, we can consider the permutation as a linear transformation on $\re^m$, and thus, using this standard basis, get a matrix.  This matrix will have in every row and in every column one nonzero entry, which is $1$.  Such a matrix is called a permutation matrix.

A {\em signed permutation matrix} is a square matrix with the property that every row and every column has one nonzero entry, which is either $1$ or $-1$. For instance, the following is a signed permutation matrix:
\[\begin{bmatrix}
-1&0&0\\
0&0&1\\
0&-1&0
\end{bmatrix}\]
As a linear transformation it sends each $e_i$ to either some $e_j$, or some $-e_j$.

Given a signed permutation matrix $P$, we can construct a permutation matrix $|P|$ by taking the absolute value of each entry.  Then if $P$ sent $e_i$ to $e_j$ or $-e_j$, $|P|$ would send $e_i$ to $e_j$.  Since matrix multiplication encodes function composition, it follows that $|PQ|=|P||Q|$ for any signed permutation matrices $P$ and $Q$.  Every signed permutation matrix $P$ can be written as $P=S\pi$, where $\pi$ is a permutation matrix (and in fact $\pi=|P|$) and $S$ is a diagonal matrix where the diagonal entries are all $1$ and $-1$.

The relevance of signed permutation matrices to Cliffordinkras is that this is how $\Gamma_i$ acts on the vertices of the Cliffordinkra, as elements of the vector space $V$ spanned by the vertices.  More specifically, a numbering of the vertices from 1 to the number of vertices gives rise to an isomorphism of $V$ to $\re^n$, so that each vertex goes to some $e_i$.  Under this isomorphism, $\Gamma_i$ is then a signed permutation matrix.

A {\em signed permutation representation} $\phi$ of the Clifford algebra $\Cl(0,n)$ of dimension $k$ means a Clifford algebra representation where all $\phi(\Gamma_i)$ are $k\times k$ signed permutation matrices.

A $\zt$-graded reprsentation of $\Cl(0,n)$ means a representation where the $\phi(\Gamma_i)$ are of the form
\[\phi(\Gamma_i)=\begin{bmatrix}
0&L_i\\
R_i&0
\end{bmatrix}\]
where $L_i$ and $R_i$ are $k/2\times k/2$ square matrices.  Another way to think of this is that $\re^k$ splits as a direct sum $\re^{k/2}\oplus \re^{k/2}$ and each $\Gamma_i$ is required to send vectors in the first summand to vectors in the second, and vice-versa.

A $\zt$-graded signed permutation representation is one that is both $\zt$-graded and where the $\phi(\Gamma_i)$ are signed permutation matrices.

Two $\zt$-graded signed permutation representations $\phi_1$ and $\phi_2$ are {\em $\zt$-graded signed permutation isomorphic} if there is a signed permutation $P$ of the form
\[P=\begin{bmatrix}
X&0\\
0&Y\end{bmatrix}\]
where $X$ and $Y$ are two $k/2\times k/2$ square matrices, so that
\begin{equation}
\phi_2(\Gamma_i)=P\phi_1(\Gamma_i)P^{-1}
\label{eqn:isorep}
\end{equation}
for all $\Gamma_i$.

Using Section~\ref{sec:cliffordinkratorep}, every Cliffordinkra with $n$ colors gives rise to a $\zt$-graded signed permutation representation of $\Cl(0,n)$.  In other words, if the Cliffordinkra has $k$ vertices, then the construction gives a set of square $k\times k$ matrices $\Gamma_1,\ldots,\Gamma_n$.  By (\ref{eqn:gammapm1}), these are signed permutation matrices.  The set of bosons and the set of fermions each span subspaces $V^0$ and $V^1$, and each $\Gamma_i$ sends $V^0$ to $V^1$ and vice-versa, because every edge of the Cliffordinkra connects a boson to a fermion.

Conversely, given a $\zt$-graded signed permutation representation of $\Cl(0,n)$ as $k\times k$ matrices, we can draw $k/2$ bosons (i.e., open dots) and $k/2$ fermions (i.e, filled dots), numbered $v_1$ through $v_k$.  We then draw an edge of color $c$ between vertex $v_i$ and $v_j$ if $(\Gamma_c)_{ij}$ is nonzero: a solid edge if it is $1$; a dashed edge if it is $-1$.  It is straightforward to prove that the definition of Cliffordinkra is satisfied.

We will compare the definition of isomorphism for Cliffordinkras and for $\zt$-graded signed permutation representations.  As a prelude to this, suppose we have a Cliffordinkra and its corresponding permutation representation, denoted by $\phi_1$.  Consider the effect of a signed permutation isomorphism given by the following signed permutation matrix:
\[P=\begin{bmatrix}
-1&0&\cdots&0\\
0&1&\cdots&0\\
\vdots&&&\vdots\\
0&0&\cdots&1
\end{bmatrix}.\]
The effect of (\ref{eqn:isorep}) is to modify the $\Gamma_i$ matrices by reversing the signs on the first column, and then reversing the signs on the first row.  This has the effect of modifying the Cliffordinkra by changing the dashing on all edges involving vertex $1$.  In other words, a vertex switch on vertex $1$.  More generally, if $P$ is a diagonal signed permutation matrix, then the isomorphism given by (\ref{eqn:isorep}) has the effect of a vertex switch on the vertices corresponding to the locations of the $-1$s on the diagonal of $P$.

We are now ready to explain the relationship between Cliffordinkra isomorphisms and $\zt$-graded signed permutation isomorphisms.  Just as a Cliffordinkra isomorphism can be understood as a graph isomorphism, followed by a vertex switch on a set of vertices, in the same way, a $\zt$-graded signed permutation isomorphism is a permutation matrix followed by a diagonal signed permutation matrix.

In this way, Cliffordinkras up to Cliffordinkra isomorphisms are the same as $\zt$-graded signed permutation representations of $\Cl(0,n)$, up to $\zt$-graded signed permutation isomorphisms.

\section{Cubes and Quotients of cubes}
\label{sec:cubes}
It is now time to introduce the basic constructions from which all Cliffordinkras with $n$ colors can be built.  Back in Section~\ref{sec:cliffordinkraex} we constructed Cliffordinkras that looked like a line segment for $n=1$, a square for $n=2$, and a three-dimensional cube (at least, its $1$-skeleton) for $n=3$.

More generally, we consider an $n$-dimensional cube, which we can take to be $[0,1]^n\subset \mathbb{R}^n$.  The set of vertices of this $n$-cube is $\{0,1\}^n$; in other words, a vertex is an $n$-tuple of $0$s and $1$s.  We say that two vertices are connected by an edge of color $i$ if the vertices agree in every coordinate except in coordinate $i$, that is, the vertices are of the form
\begin{gather*}
(x_1,\ldots,x_{i-1},0,x_{i+1},\ldots,x_n)\\
(x_1,\ldots,x_{i-1},1,x_{i+1},\ldots,x_n).
\end{gather*}
The vertex $(x_1,\ldots,x_n)$ is a boson if $\sum x_i$ is even, and a fermion if $\sum x_i$ is odd.

The reader can check that this satisfies the Bipartite, Regular edge-colored, and Quadrilateral properties.  A more difficult challenge is to prove that there exists a dashing that satisfies the Totally odd property, and this fact will be a consequence of the ideas in Section~\ref{sec:cliffordcube}.

This construction gives rise to Cliffordinkras for any $n$, and these Cliffordinkras we call {\em cubical Cliffordinkras}.  But there are connected Cliffordinkras that are not cubical: the Cliffordinkra from Figure~\ref{fig:cliffordinkraex} has $n=4$ colors, and $8$ vertices, while a 4-dimensional cube has $2^4=16$ vertices.

That Cliffordinkra is obtained as follows.  We begin with the $4$-dimensional cube:

\begin{center}
\begin{tikzpicture}[scale=0.15]
\GraphInit[vstyle=Welsh]
\SetVertexNormal[MinSize=5pt]
\SetUpEdge[labelstyle={draw},style={ultra thick}]
\tikzset{Dash/.style={dashed,draw,ultra thick}}
\Vertex[x=30,y=5,Math,L={0001},Lpos=180]{0001}
\Vertex[x=40,y=5,Math,L={1001},Lpos=0]{1001}
\Vertex[x=30,y=-5,Math,L={0101},Lpos=180]{0101}
\Vertex[x=40,y=-5,Math,L={1101},Lpos=0]{1101}
\Vertex[x=35,y=7,Math,L={0011},Lpos=180]{0011}
\Vertex[x=45,y=7,Math,L={1011},Lpos=0]{1011}
\Vertex[x=35,y=-3,Math,L={0111},Lpos=180]{0111}
\Vertex[x=45,y=-3,Math,L={1111},Lpos=0]{1111}

\Vertex[x=0,y=0,Math,L={0000},Lpos=180]{0000}
\Vertex[x=10,y=0,Math,L={1000},Lpos=0]{1000}
\Vertex[x=0,y=-10,Math,L={0100},Lpos=180]{0100}
\Vertex[x=10,y=-10,Math,L={1100},Lpos=0]{1100}
\Vertex[x=5,y=2,Math,L={0010},Lpos=180]{0010}
\Vertex[x=15,y=2,Math,L={1010},Lpos=0]{1010}
\Vertex[x=5,y=-8,Math,L={0110},Lpos=180]{0110}
\Vertex[x=15,y=-8,Math,L={1110},Lpos=0]{1110}
\AddVertexColor{black}{1000,0100,0010,0001,1110,1101,1011,0111}
\Edge(0011)(1011)
\Edge(0111)(1111)
\Edge[color=red,style=Dash](0011)(0111)
\Edge[color=red](1011)(1111)
\Edge[color=blue](0001)(0011)
\Edge[color=blue,style=Dash](1001)(1011)
\Edge[color=blue,style=Dash](0101)(0111)
\Edge[color=blue](1101)(1111)
\Edge(0001)(1001)
\Edge(0101)(1101)
\Edge[color=red,style=Dash](0001)(0101)
\Edge[color=red](1001)(1101)

\Edge[color=green,style=Dash](0000)(0001)
\Edge[color=green](1000)(1001)
\Edge[color=green,style=Dash](0100)(0101)
\Edge[color=green](1100)(1101)
\Edge[color=green](0010)(0011)
\Edge[color=green,style=Dash](1010)(1011)
\Edge[color=green](0110)(0111)
\Edge[color=green,style=Dash](1110)(1111)
\Edge(0010)(1010)
\Edge(0110)(1110)
\Edge[color=red](0010)(0110)
\Edge[color=red,style=Dash](1010)(1110)
\Edge[color=blue](0000)(0010)
\Edge[color=blue,style=Dash](1000)(1010)
\Edge[color=blue,style=Dash](0100)(0110)
\Edge[color=blue](1100)(1110)
\Edge(0000)(1000)
\Edge(0100)(1100)
\Edge[color=red](0000)(0100)
\Edge[color=red,style=Dash](1000)(1100)
\end{tikzpicture}
\end{center}

We then pair together vertices according to the rule that $(x_1,x_2,x_3,x_4)$ gets paired with $(1-x_1,1-x_2,1-x_3,1-x_4)$.  So for instance, $0010$ gets paired with $1101$.  The pairs are shown below, linked with purple edges:

\begin{center}
\begin{tikzpicture}[scale=0.15]
\GraphInit[vstyle=Welsh]
\SetVertexNormal[MinSize=5pt]
\SetUpEdge[labelstyle={draw},style={ultra thick}]
\tikzset{Dash/.style={dashed,draw,ultra thick}}
\Vertex[x=30,y=5,Math,L={0001},Lpos=180]{0001}
\Vertex[x=40,y=5,Math,L={1001},Lpos=0]{1001}
\Vertex[x=30,y=-5,Math,L={0101},Lpos=180]{0101}
\Vertex[x=40,y=-5,Math,L={1101},Lpos=0]{1101}
\Vertex[x=35,y=7,Math,L={0011},Lpos=180]{0011}
\Vertex[x=45,y=7,Math,L={1011},Lpos=0]{1011}
\Vertex[x=35,y=-3,Math,L={0111},Lpos=180]{0111}
\Vertex[x=45,y=-3,Math,L={1111},Lpos=0]{1111}
\Vertex[x=0,y=0,Math,L={0000},Lpos=180]{0000}
\Vertex[x=10,y=0,Math,L={1000},Lpos=0]{1000}
\Vertex[x=0,y=-10,Math,L={0100},Lpos=180]{0100}
\Vertex[x=10,y=-10,Math,L={1100},Lpos=0]{1100}
\Vertex[x=5,y=2,Math,L={0010},Lpos=180]{0010}
\Vertex[x=15,y=2,Math,L={1010},Lpos=0]{1010}
\Vertex[x=5,y=-8,Math,L={0110},Lpos=180]{0110}
\Vertex[x=15,y=-8,Math,L={1110},Lpos=0]{1110}
\AddVertexColor{black}{1000,0100,0010,0001,1110,1101,1011,0111}
\Edge(0011)(1011)
\Edge(0111)(1111)
\Edge[color=red,style=Dash](0011)(0111)
\Edge[color=red](1011)(1111)
\Edge[color=blue](0001)(0011)
\Edge[color=blue,style=Dash](1001)(1011)
\Edge[color=blue,style=Dash](0101)(0111)
\Edge[color=blue](1101)(1111)
\Edge(0001)(1001)
\Edge(0101)(1101)
\Edge[color=red,style=Dash](0001)(0101)
\Edge[color=red](1001)(1101)

\Edge[color=green,style=Dash](0000)(0001)
\Edge[color=green](1000)(1001)
\Edge[color=green,style=Dash](0100)(0101)
\Edge[color=green](1100)(1101)
\Edge[color=green](0010)(0011)
\Edge[color=green,style=Dash](1010)(1011)
\Edge[color=green](0110)(0111)
\Edge[color=green,style=Dash](1110)(1111)
\Edge(0010)(1010)
\Edge(0110)(1110)
\Edge[color=red](0010)(0110)
\Edge[color=red,style=Dash](1010)(1110)
\Edge[color=blue](0000)(0010)
\Edge[color=blue,style=Dash](1000)(1010)
\Edge[color=blue,style=Dash](0100)(0110)
\Edge[color=blue](1100)(1110)
\Edge(0000)(1000)
\Edge(0100)(1100)
\Edge[color=red](0000)(0100)
\Edge[color=red,style=Dash](1000)(1100);
\draw[color=purple,thick] (0000) circle (2);
\draw[color=purple,thick] (0100) circle (2);
\draw[color=purple,thick] (1000) circle (2);
\draw[color=purple,thick] (1100) circle (2);
\draw[color=purple,thick] (0010) circle (2);
\draw[color=purple,thick] (0110) circle (2);
\draw[color=purple,thick] (1010) circle (2);
\draw[color=purple,thick] (1110) circle (2);
\draw[color=purple,thick] (0001) circle (2);
\draw[color=purple,thick] (0101) circle (2);
\draw[color=purple,thick] (1001) circle (2);
\draw[color=purple,thick] (1101) circle (2);
\draw[color=purple,thick] (0011) circle (2);
\draw[color=purple,thick] (0111) circle (2);
\draw[color=purple,thick] (1011) circle (2);
\draw[color=purple,thick] (1111) circle (2);
\draw[color=purple,thick] (0000)--(1111);
\draw[color=purple,thick] (0100)--(1011);
\draw[color=purple,thick] (1000)--(0111);
\draw[color=purple,thick] (1100)--(0011);
\draw[color=purple,thick] (0010)--(1101);
\draw[color=purple,thick] (0110)--(1001);
\draw[color=purple,thick] (1010)--(0101);
\draw[color=purple,thick] (1110)--(0001);
\end{tikzpicture}
\end{center}

We then form a Cliffordinkra by taking these pairs of vertices for our new vertices.  Since every pair has one vertex whose last bit is 0 (in the cube in the lower left) and one vertex whose last bit is 1 (in the cube in the upper right), we can take as a representative the vertex whose last bit is 0.  This gives us the cube in the lower left.

There is an edge in the new Cliffordinkra of color $i$ between $v$ and $w$ if there is an edge of that color between either of the vertices that produced $v$ to either of the vertices that produced $w$.  The result is the following Cliffordinkra.  Again, we will postpone discussion of the dashed edges.

\begin{center}
\begin{tikzpicture}[scale=0.15]
\GraphInit[vstyle=Welsh]
\SetVertexNormal[MinSize=5pt]
\SetUpEdge[labelstyle={draw},style={ultra thick}]
\tikzset{Dash/.style={dashed,draw,ultra thick}}
\Vertex[x=0,y=0,Math,L={0000},Lpos=180]{0000}
\Vertex[x=10,y=0,Math,L={1000},Lpos=0]{1000}
\Vertex[x=0,y=-10,Math,L={0100},Lpos=180]{0100}
\Vertex[x=10,y=-10,Math,L={1100},Lpos=0]{1100}
\Vertex[x=5,y=2,Math,L={0010},Lpos=180]{0010}
\Vertex[x=15,y=2,Math,L={1010},Lpos=0]{1010}
\Vertex[x=5,y=-8,Math,L={0110},Lpos=180]{0110}
\Vertex[x=15,y=-8,Math,L={1110},Lpos=0]{1110}
\AddVertexColor{black}{1000,0100,0010,1110}
\Edge[color=green,style=Dash](0000)(1110)
\Edge[color=green](1000)(0110)
\Edge[color=green,style=Dash](0100)(1010)
\Edge[color=green](1100)(0010)

\Edge(0010)(1010)
\Edge(0110)(1110)
\Edge[color=red](0010)(0110)
\Edge[color=red,style=Dash](1010)(1110)
\Edge[color=blue](0000)(0010)
\Edge[color=blue,style=Dash](1000)(1010)
\Edge[color=blue,style=Dash](0100)(0110)
\Edge[color=blue](1100)(1110)
\Edge(0000)(1000)
\Edge(0100)(1100)
\Edge[color=red](0000)(0100)
\Edge[color=red,style=Dash](1000)(1100)
\end{tikzpicture}
\end{center}

More generally, suppose $C$ is a subgroup of $\zt^n=\{0,1\}^n$.  Take as the set of vertices the quotient $\zt^n/C$.  Then a vertex is a coset $v+C$, that is, a set of vertices in the $n$-dimensional cube so that any two differ by an element of $C$.  An edge of color $i$ in $\zt^n/C$ connects vertices $v+C$ and $w+C$ if there is an edge of color $i$ in $\zt^n$ connecting some vertex in $v+C$ with some vertex in $w+C$.  The example we just did above uses the subgroup $C=\{0000,1111\}\subset \zt^4$.

It turns out that every Cliffordinkra has the topology of a quotient of an $n$-dimensional cube by a subgroup $C$ of $\zt^n$, as we will see in Section~\ref{sec:at}. 

\section{Binary Linear Block Codes}
\label{sec:code}
From the previous section, we can see that it is important to study the subgroups of $\zt^n$.  This is the theory of binary linear block codes.  We include the basics here to make this paper self-contained, but a more thorough introduction is Ref.~\cite{rCHVP}.

In coding theory, two people communicate by sending strings of symbols.  A set of strings is called a {\em code}.  The idea is that these strings are those that are meaningful to the people communicating.  An element of the code is called a {\em codeword}.  One application is that if the communication method can sometimes garble the signal, then whenever a string that is not a codeword is received, then the recipient knows that an error has occurred and can ask for the message again.  This is called {\em error detection}.  Under certain circumstances it may be possible for the recipient to figure out what the message must have been, and to correct it without having to ask the recipient.  This is called {\em error correction}.  Error correction and error detection have many applications, including banking, wifi, Universal Product Codes, streaming video, computer storage, and much more.

Computers naturally use binary strings, that is, strings of $0$s and $1$s.  A code using only $0$s and $1$s is called a {\em binary code}.  A {\em block code} is a code where each string has the same length.  So for instance, a code like $\{0100,1111,0110\}$ is a binary block code of length 4.  A {\em linear code} is a binary block code that is closed under addition modulo 2.  That is, a binary linear block code of length $n$ is a subgroup of $\zt^n$.  So $\{000,011,101,110\}$ is a linear code because the sum of any two elements is another element.  We can also think of $\zt^N$ as a vector space over the field $\gf_2=\{0,1\}$ of integers modulo 2.  Then a linear code is a vector subspace of $\zt^N$.  It is therefore generated by a basis of nonzero codewords.  The number of elements in that basis is the {\em dimension} of the code.  For instance, the linear code $\{000,011,101,110\}$ is generated by the basis $\{011,101\}$.  A linear code of dimension $k$ has $2^k$ codewords.  It is sometimes convenient to write a basis as a matrix, where each row of the matrix is an element of the basis.  So the previous example has generating matrix
\[\begin{bmatrix}
0&1&1\\
1&0&1
\end{bmatrix}\]

The {\em weight} of a string $w$, written $\wt(w)$, is the number of $1$s.  So the weight of 1110101 is 5.  A code is {\em even} if every codeword has even weight, and a code is {\em doubly even} if every codeword has weight a multiple of $4$.  The {\em minimal weight} of a linear code is the minimum weight among nonzero codewords.

A fundamental property in linear codes is
\begin{equation}
\wt(v+w)=\wt(v)+\wt(w)-2\wt(v\wedge w)
\label{eqn:wtadd}
\end{equation}
where $v\wedge w$ is the string obtained by bitwise and of $v$ and $w$; that is, the $i$th bit of $v\wedge w$ is $1$ if and only if the $i$th bit of $v$ is $1$ and the $i$th bit of $w$ is $1$.

As a consequence, a code generated by $\{g_1,\ldots,g_k\}$ is even if and only if every $\wt(g_i)$ is even, and it is doubly even if and only if every $\wt(g_i)$ is a multiple of $4$, and for every $i$ and $j$, $g_i\wedge g_j$ has even weight.

\section{Classification of Cliffordinkra Topologies}
\label{sec:at}
We are now ready for the main classification of Cliffordinkra topologies.
\begin{thm}
Every connected Cliffordinkra has the topology of a quotient $\zt^n/C$, where $C$ is a doubly even binary linear block code.
\end{thm}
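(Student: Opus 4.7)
My plan is to encode the vertex set $V$ as a quotient of $\zt^n$ by converting the Clifford action into a commuting permutation action. From Section~\ref{sec:cliffordinkratorep}, the Cliffordinkra yields operators $\Gamma_1,\ldots,\Gamma_n$ satisfying $\Gamma_i^2=I$ and $\Gamma_i\Gamma_j=-\Gamma_j\Gamma_i$ for $i\ne j$. Their entrywise absolute values $|\Gamma_i|$ are permutation matrices that still satisfy $|\Gamma_i|^2=I$ and now genuinely commute: $|\Gamma_i||\Gamma_j|=|\Gamma_i\Gamma_j|=|\Gamma_j\Gamma_i|=|\Gamma_j||\Gamma_i|$. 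Hence $e_i\mapsto|\Gamma_i|$ defines an action of $\zt^n$ on $V$. Fixing a base vertex $v_0$ and setting
\[\phi(w)=|\Gamma_1|^{w_1}\cdots|\Gamma_n|^{w_n}(v_0),\]
I would observe that connectedness of the Cliffordinkra makes this action transitive, so $\phi\colon\zt^n\to V$ is surjective. Letting $C=\ker\phi$, the induced map $\bar\phi\colon\zt^n/C\to V$ is a bijection, and the identity $|\Gamma_i|\phi(v)=\phi(v+e_i)$ translates color-$i$ edges on one side into color-$i$ edges on the other, giving an isomorphism of colored graphs. Uniqueness of a color-$i$ edge at each vertex (the Regular edge-colored axiom) ensures no extra edges appear.

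It remains to show $C$ is a doubly even code. Evenness follows at once from the bipartite condition: each $|\Gamma_i|$ exchanges bosons and fermions, so if $w\in C$ has weight $k$, then the action of $w$ both fixes $v_0$ and changes its boson/fermion parity $k$ times, forcing $k$ to be even. For divisibility by $4$, I would fix $w\in C$ of weight $k$ with support $i_1<\cdots<i_k$ and set $A=\Gamma_{i_1}\cdots\Gamma_{i_k}$. The key identity is
\[A^2=(-1)^{k(k-1)/2}I,\]
obtained by moving the leftmost $\Gamma_{i_1}$ of the right copy of $A$ past the remaining $\Gamma_{i_2},\ldots,\Gamma_{i_k}$ of the left copy (picking up a sign $(-1)^{k-1}$), cancelling the resulting $\Gamma_{i_1}^2$, and iterating. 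Since $|A|v_0=v_0$ we have $Av_0=\pm v_0$ and hence $A^2v_0=v_0$; this forces $(-1)^{k(k-1)/2}=1$, i.e.\ $k\equiv 0$ or $1\pmod 4$, which combined with $k$ even gives $k\equiv 0\pmod 4$.

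The main obstacle will be the sign computation $A^2=(-1)^{k(k-1)/2}I$, which demands careful bookkeeping with the Clifford anticommutation relations; once that identity is in hand, the rest of the argument is essentially packaging the bipartite property together with this identity. A secondary point to verify is that the graph-theoretic structure genuinely transports along $\bar\phi$, but this is immediate from the regular edge-coloring condition together with the commutation $|\Gamma_i||\Gamma_j|=|\Gamma_j||\Gamma_i|$, which makes the $\zt^n$ action on $V$ well-defined coordinate-wise.
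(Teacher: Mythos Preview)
Your proposal is correct and follows essentially the same route as the paper: pass to the commuting permutations $|\Gamma_i|$ to get a transitive $\zt^n$-action, identify $V$ with $\zt^n/C$ via the stabilizer of a base vertex, deduce evenness from the bipartite structure, and obtain doubly-evenness from $A^2=(-1)^{k(k-1)/2}I$ together with $Av_0=\pm v_0$. The only cosmetic difference is that you state $A^2=(-1)^{k(k-1)/2}I$ as a Clifford-algebra identity outright, whereas the paper derives the same sign count while acting on $v_0$.
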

\begin{proof}
Suppose one has a connected Cliffordinkra.  This has a set of vertices and matrices $\Gamma_1,\ldots,\Gamma_n$.

Let $|\Gamma_i|$ be the result of taking the absolute value of each entry of $\Gamma_i$.  Then $|\Gamma_i|$ can be viewed as a permutation of the vertices of the Cliffordinkra.  Recall from Section~\ref{sec:signedperm} that $|AB|=|A||B|$ when $A$ and $B$ are signed permutation matrices.  Then $|\Gamma_i|^2=I$ and for any $i\not=j$, $|\Gamma_i||\Gamma_j|=|\Gamma_j||\Gamma_i|$. 

 From this, $\zt^N$ acts on the set of vertices as follows:
\[(x_1,\ldots,x_n)\cdot v = |\Gamma_1|^{x_1}\cdots|\Gamma_n|^{x_n}(v)\]
To see that this is a group action, we compute:
\begin{align*}
(x_1,\ldots,x_n)\cdot((y_1,\ldots,y_n)\cdot v)
&= |\Gamma_1|^{x_1}\cdots|\Gamma_n|^{x_n} |\Gamma_1|^{y_1}\cdots|\Gamma_n|^{y_n}(v)\\
|\Gamma_1|^{x_1+y_1}\cdots|\Gamma_n|^{x_n+y_n}(v)\\
\end{align*}
where we have used the fact that the $|\Gamma_i|$ commute.  Since $|\Gamma_i|^2=I$, we can compute $x_i+y_i$ modulo 2, and we get
\[(x_1+y_1,\ldots,x_n+y_n)\cdot v.\]
thereby demonstrating that this is a group action of $\zt^n$ on the set of vertices.

We now show that if the Cliffordinkra is connected, then this action is transitive.  Let $v$ and $w$ be vertices.  Since the Cliffordinkra is connected, there is a path from $v$ to $w$.  Let $k$ be the length of this path.  We will prove that there is a sequence of $\Gamma_{i_1},\ldots,\Gamma_{i_k}$ so that $|\Gamma_{i_k}|\cdots|\Gamma_{i_1}|(v)=w$.  This is proved using induction on $k$.

The base case $k=0$ is trivial: $v=w$ and we use the empty sequence.  The inductive case involves the length of a path from $v$ to $w$ being $k>0$.  Then this path is a concatenation of a path from $v$ to some vertex $x$ of length $k-1$, followed by an edge $e$ from $x$ to $w$.  Let the color of $e$ be $i$, and let $e_i$ be the $n$-tuple $e_i=(0,\ldots,0,1,0,\ldots,0)$ with a $1$ in the $i$th entry.  The inductive hypothesis says that there is a $(x_1,\ldots,x_n)$ so that $(x_1,\ldots,x_n)\cdot v=x$.  Since $e_i\cdot x=|\Gamma_i|x=w$, we get $(e_i+(x_1,\ldots,x_n))\cdot v=w$.

  Pick a vertex $v_0$.  Let $C$ be the stabilizer of $v_0$ (in fact, it can be proven straightforwardly that $C$ does not depend on $v_0$).  This $C$ is a binary linear block code, and is the code associated to a Cliffordinkra.  By the Orbit Stabilizer theorem, we have a bijection from $\zt^n/C$ to the set of vertices.  This sends $(x_1,\ldots,x_n)$ to $(x_1,\ldots,x_n)\cdot v_0$.  If there is an edge in the Cliffordinkra of color $i$ connecting $v$ to $w$, then $|\Gamma_i|(v)=w$, and so $e_i\cdot v=w$, where $e_i$ is the standard basis element in $\zt^n$.  If $v=(x_1,\ldots,x_n)\cdot v_0$, then $w=(e_i+(x_1,\ldots,x_n))\cdot v_0$.  Then, there is a edge in $\zt^n$ of color $i$ between the the vertices in $\zt^n/C$ corresponding to $v$ and $w$ by this bijection.  Therefore there is a bijection between the Cliffordinkra and $\zt^n/C$ for some binary linear block code $C$ of length $n$, and this preserves edges and colors.

Since each $\Gamma_i$ connects bosons to fermions and vice-versa, products of $\Gamma_i$ connect bosons to bosons if and only if there are an even number of $\Gamma_i$s in that product.  Then if
\[(x_1,\ldots,x_n)\cdot v=v\]
then $\wt(x_1,\ldots,x_n)$ must be even.  Thus, $C$ must be even.

To show $C$ must be doubly even, let $x=(x_1,\ldots,x_n)\in C$.  Then
\[|\Gamma_1|^{x_1}\cdots|\Gamma_n|^{x_n}v_0=v_0.\]
Now $\Gamma_1{}^{x_1}\cdots\Gamma_n{}^{x_n}$ is a signed permutation matrix, and its absolute value, as we just saw, sends $v_0$ to $v_0$.  Therefore,
\[\Gamma_1{}^{x_1}\cdots\Gamma_n{}^{x_n}v_0=\pm v_0.\]
Whichever it is, when we do it twice, we come back to where we started:
\begin{equation}
\left(\Gamma_1{}^{x_1}\cdots\Gamma_n{}^{x_n}\right)^2v_0= v_0.
\label{eqn:loopdash}
\end{equation}
We can then simplify
\[\left(\Gamma_1{}^{x_1}\cdots\Gamma_n{}^{x_n}\right)^2
=\left(\Gamma_1{}^{x_1}\cdots\Gamma_n{}^{x_n}\right)
\left(\Gamma_1{}^{x_1}\cdots\Gamma_n{}^{x_n}\right)\]
by moving the $\Gamma_i$ past each other, gaining a $-1$ every time we do so.  The first nonzero $x_i$ moves past $\wt(x)-1$ other $\Gamma_i$.  The next moves past $\wt(x)-2$ other $\Gamma_i$.  This continues until we have moved a total of
\[\frac{\wt(x)(\wt(x)-1)}{2}\]
times.  To make (\ref{eqn:loopdash}) correct, this number must be even.  Since $\wt(x)$ is already even, $\wt(x)-1$ is odd, and this fraction can only be even if $\wt(x)$ is a multiple of $4$.  Therefore $C$ is doubly even.
\end{proof}

Conversely, every quotient $\zt^n/C$, where $C$ is a doubly even binary linear block code, is the topology of a connected Cliffordinkra.  The proof of this is in the next section.

In the meantime, here is a chart illustrating the relationship between weights of codewords and properties of the quotient a cube by a code:
\begin{center}
\begin{tabular}{ccc}
Weight of&&Quotient\\
codeword&&$\zt^n/C$\\\hline
even&$\leftrightarrow$&bipartite\\
$\ne 1, 2$&$\leftrightarrow$&quadrilateral\\
doubly even&$\leftrightarrow$&totally odd
\end{tabular}
\end{center}
If all weights of $C$ are even, then the quotient $\zt^n/C$ is bipartite, and vice-versa.  If all weights of $C$ are not $1$ or $2$, then $\zt^n/C$ has the quadrilateral property, and conversely.  This is not hard to see: the cube has the quadrilateral property and codewords of weight $1$ and $2$ are precisely the ones that will turn bicolor cycles into cycles of length $1$ and $2$.  What we have just seen is that if $C$ is doubly even, then $\zt^n/C$ has a totally odd dashing.  In the next section, we show how, given a doubly even code $C$, to construct an Adinkra, and the corresponding $\zt$-graded signed permutation representaion of $\Cl(0,n)$.

\section{The Clifford Cube and its quotients}
\label{sec:cliffordcube}
In this section we will use Clifford algebra constructions to create Cliffordinkras and their corresponding signed permutation representations of $\Cl(0,n)$.  This will construct all such Cliffordinkras up to equivalence, and thus, all signed permutation representations up to isomorphism.

First, we construct the $n$-cube.  To do this, we take the action of $\Cl(0,n)$ on itself by left multiplication.  We denote by $\Gamma_1,\ldots,\Gamma_n$ the generators of $\Cl(0,n)$.\footnote{We use the same symbol $\Gamma_i$ for the element of the Clifford algebra and for the matrix in its representation.  The context will make clear which is meant.}  For each $n$-tuple $(x_1,\ldots,x_n) \in \{0,1\}^n$, we define
\[\Gamma_{x_1,\ldots,x_n}=\Gamma_1{}^{x_1}\cdots\Gamma_n{}^{x_n}.\]
These products form a basis of $2^n$ elements of $\Cl(0,n)$, as a real vector space.

If we act with $\Gamma_i$ on such a product, the result will be another such product, but with the $i$th coordinate changed, perhaps with a minus sign, due to the definition of $\Gamma_{x_1,\ldots,x_n}$ and the anticommutativity of the $\Gamma_i$.  Thus, $\Gamma_i$ acts as a signed permutation of the $\Gamma_{x_1,\ldots,x_n}$.  If we write $e_i=(0,\ldots,0,1,0,\ldots,0)$ for the $i$th standard basis element in $\mathbb{R}^n$, then we might write this as
\[\Gamma_i\Gamma_{x_1,\ldots,x_n}=\pm \Gamma_{e_i+(x_1,\ldots,x_n)}.\]

We can associate to each $\Gamma_{x_1,\ldots,x_n}$ the $n$-tuple $(x_1,\ldots,x_n)\in \{0,1\}^n$, which is a vertex of the $n$-cube.  We draw a solid edge of color $i$ from $(x_1,\ldots,x_n)$ to $e_i+(x_1,\ldots,x_n)$ if $\Gamma_i\Gamma_{x_1,\ldots,x_n}=\Gamma_{e_i+(x_1,\ldots,x_n)}$, and a dashed edge of color $i$ if $\Gamma_i\Gamma_{x_1,\ldots,x_n}=-\Gamma_{e_i+(x_1,\ldots,x_n)}$.

In this way we end up with a Cliffordinkra with the topology of the $n$-cube.  This matches the construction in Section~\ref{sec:cliffordinkratorep}.  By the way, in particular, this shows that there are in fact totally odd dashings on the $n$-cube.

We now show how to perform a quotient of this $n$-cube.

If $x=(x_1,\ldots,x_n)\in \{0,1\}^n$, we define two Clifford algebra elements $\pi_{x,+}$ and $\pi_{x,-}$ as follows:
\[\pi_{x,\pm}=\frac12(1\pm \Gamma_x).\]
This $\pi_{x,\pm}$ is purely even if and only if $\wt(x)$ is even.  Now $\Gamma_x{}^2=\pm I$, where this is $I$ if $\wt(x)$ is $0$ or $1$ mod 4 and $-I$ if $\wt(x)$ is $2$ or $3$ mod $4$.  We will be interested in the case where $\wt(x)$ is a multiple of $4$.  In that situation,
\begin{align*}
\pi_{x,\pm}{}^2&=\frac14(1\pm \Gamma_x)^2\\
&=\frac14(1\pm 2\Gamma_x+1)\\
&=\frac12(1\pm \Gamma_x)\\
&=\pi_{x,\pm}
\end{align*}
so right multiplication by $\pi_{x,\pm}$ is a projection operator on $\Cl(0,n)$.  This is consistent with the left multiplication action of $\Cl(0,n)$ on itself because left multiplication commutes with right multiplication.

This splits $\Cl(0,n)$ as a direct sum of two subrepresentations: the image of $\pi_+$ which is the kernel of $\pi_-$; and the image of $\pi_-$ which is the kernel of $\pi_+$.

For example, take $n=4$, and $x=1111$.  There are sixteen vertices in the $4$-cube:
\[\Gamma_{0000},\Gamma_{0001},\Gamma_{0010},\Gamma_{0011},\]
\[\Gamma_{0100},\Gamma_{0101},\Gamma_{0110},\Gamma_{0111},\]
\[\Gamma_{1000},\Gamma_{1001},\Gamma_{1010},\Gamma_{1011},\]
\[\Gamma_{1100},\Gamma_{1101},\Gamma_{1110},\Gamma_{1111}\]
For each of these original sixteen vertices, we apply $\pi_{1111,+}$ and $\pi_{1111,-}$, so we should have 32 answers.  But we will see that there will be an overcounting: define $a_y=\Gamma_y\pi_{1111,+}$ and $b_y=\Gamma_y\pi_{1111,-}$, and we see all of these projections:
\begin{align*}
a_{0000}&=a_{1111}, &a_{0011}&=-a_{1100}, &a_{0101}&=a_{1010}, &a_{0110}&=-a_{1001}\\
a_{0001}&=-a_{1110}, &a_{0010}&=a_{1101}, &a_{0100}&=-a_{1011}, &a_{1000}&=a_{0111}\\
b_{0000}&=-b_{1111}, &b_{0011}&=b_{1100}, &b_{0101}&=-b_{1010}, &b_{0110}&=b_{1001}\\
b_{0001}&=b_{1110}, &b_{0010}&=-b_{1101}, &b_{0100}&=b_{1011}, &b_{1000}&=-b_{0111}\\
\end{align*}
As you can see, these pair up into 16 different answers.

We can apply $\Gamma_i$ as before to these $a_y$ and $b_y$ and because left multiplication commutes with right multiplication,
\[\Gamma_ia_y=\Gamma_i\Gamma_y\pi_{x,+}=\pm\Gamma_{e_i+y}\pi_{x,+}=\pm a_{e_i+y}\]
so $\Gamma_i$ sends an $a$-labeled vertex to another $a$-labeled vertex.  Likewise it sends a $b$-labeled vertex to a $b$-labeled vertex.

Therefore, we have a disjoint union of two Cliffordinkras: one with all of the $a$ vertices and one with all of the $b$ vertices.

These Cliffordinkras are shown in Figure~\ref{fig:twins}.
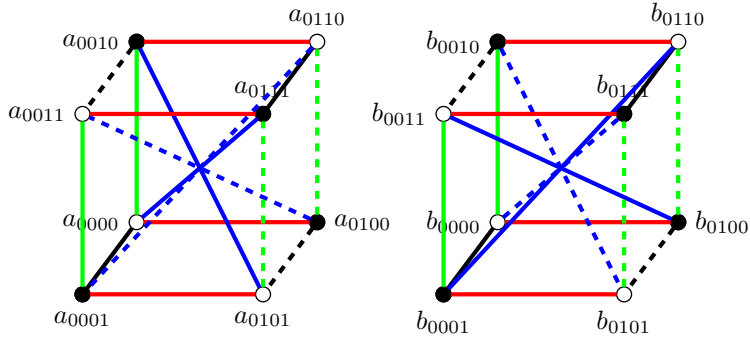
\begin{figure}[h]
\begin{center}
\begin{tikzpicture}[scale=0.12]
\GraphInit[vstyle=Welsh]
\SetVertexNormal[MinSize=5pt]
\SetUpEdge[labelstyle={draw},style={ultra thick}]
\tikzset{Dash/.style={dashed,draw,ultra thick}}
\Vertex[x=0,y=0,Math,L={a_{0000}},Lpos=180]{A}
\Vertex[x=20,y=0,Math,L={a_{0100}},Lpos=0]{C}
\Vertex[x=0,y=20,Math,L={a_{0010}},Lpos=180]{D}
\Vertex[x=20,y=20,Math,L={a_{0110}},Lpos=90]{E}
\Vertex[x=-6,y=-8,Math,L={a_{0001}},Lpos=270]{B}
\Vertex[x=14,y=-8,Math,L={a_{0101}},Lpos=270]{F}
\Vertex[x=-6,y=12,Math,L={a_{0011}},Lpos=180]{G}
\Vertex[x=14,y=12,Math,L={a_{0111}},Lpos=90]{H}
\AddVertexColor{black}{B,C,D,H}
\Edge[color=red](A)(C)
\Edge[color=red](D)(E)
\Edge[color=green](A)(D)
\Edge[color=green, style=Dash](C)(E)
\Edge(A)(B)
\Edge[style=Dash](C)(F)
\Edge[style=Dash](D)(G)
\Edge(E)(H)
\Edge[color=blue](A)(H)
\Edge[color=blue,style=Dash](C)(G)
\Edge[color=blue](D)(F)
\Edge[color=blue,style=Dash](E)(B)
\Edge[color=red](B)(F)
\Edge[color=red](G)(H)
\Edge[color=green](B)(G)
\Edge[color=green, style=Dash](F)(H)
\Vertex[x=40,y=0,Math,L={b_{0000}},Lpos=180]{A2}
\Vertex[x=60,y=0,Math,L={b_{0100}},Lpos=0]{C2}
\Vertex[x=40,y=20,Math,L={b_{0010}},Lpos=180]{D2}
\Vertex[x=60,y=20,Math,L={b_{0110}},Lpos=90]{E2}
\Vertex[x=34,y=-8,Math,L={b_{0001}},Lpos=270]{B2}
\Vertex[x=54,y=-8,Math,L={b_{0101}},Lpos=270]{F2}
\Vertex[x=34,y=12,Math,L={b_{0011}},Lpos=180]{G2}
\Vertex[x=54,y=12,Math,L={b_{0111}},Lpos=90]{H2}
\AddVertexColor{black}{B2,C2,D2,H2}
\Edge[color=red](A2)(C2)
\Edge[color=red](D2)(E2)
\Edge[color=green](A2)(D2)
\Edge[color=green, style=Dash](C2)(E2)
\Edge(A2)(B2)
\Edge[style=Dash](C2)(F2)
\Edge[style=Dash](D2)(G2)
\Edge(E2)(H2)
\Edge[color=blue,style=Dash](A2)(H2)
\Edge[color=blue](C2)(G2)
\Edge[color=blue,style=Dash](D2)(F2)
\Edge[color=blue](E2)(B2)
\Edge[color=red](B2)(F2)
\Edge[color=red](G2)(H2)
\Edge[color=green](B2)(G2)
\Edge[color=green, style=Dash](F2)(H2)
\end{tikzpicture}
\end{center}
\caption{The two Cliffordinkras obtained by the projections $\pi_{1111,\pm}$.}
\label{fig:twins}
\end{figure}

Each of these is an Adinkra with code $\{0000,1111\}$, and they differ only in their dashing.

More generally, if $x=(x_1,\ldots,x_n)$ has weight a multiple of 4, define $\pi_{x,\pi}=\frac12(1\pm\Gamma_x)$.  Take the $\Gamma_y$ basis for the vector space $\Cl(0,n)$ and define $a_y=\Gamma_y\pi_{x,+}$ and $b_y=\Gamma_y\pi_{x,-}$.  These will form bases for two subrepresentations of $\Cl(0,n)$, namely, $\Cl(0,n)\pi_{x,+}$ and $\Cl(0,n)\pi_{x,-}$.  Since $\Gamma_x\pi_{x,+}=\pi_{x,+}$, and $\Gamma_x\pi_{x,-}=-\pi_{x,-}$, there is an identification $a_x=a_0$ and $b_x=-b_0$.  Likewise by applying various $\Gamma_y$ to the left of these identifications, we get identifications of $a_y$ with $\pm a_{x+y}$, and $b_y$ with $\mp b_{x+y}$.

We might attempt to do this trick more than once: suppose $x=(x_1,\ldots,x_n)$ and $z=(z_1,\ldots,z_n)$ are $n$-tuples of $0$s and $1$s, each with weight a multiple of $4$.  Then again, $\pi_{x,\pm}$ and $\pi_{z,\pm}$ are projection operators when viewed as multiplying $\Cl(0,n)$ on the right. Define:
\begin{align*}
\pi_{x,z,+,+}&=\pi_{x,+}\pi_{z,+}\\
\pi_{x,z,+,-}&=\pi_{x,+}\pi_{z,-}\\
\pi_{x,z,-,+}&=\pi_{x,-}\pi_{z,+}\\
\pi_{x,z,-,-}&=\pi_{x,-}\pi_{z,-}
\end{align*}
If $\pi_{x,\pm}$ and $\pi_{z,\pi}$ commute, then these four combinations are all projections as well, and the resulting image is a subrepresentation of $\Cl(0,n)$.  On the other hand, these commute if and only if $\wt(x\wedge z)$ is even, which we already saw in (\ref{eqn:wtadd}) was a requirement for $x+z$ being doubly even.  So once we quotient with $x$ if $\{0,x\}$ is a doubly even code, and once we have done that, we can quotient with $z$ if $\{0,x,z,x+z\}$ is a doubly even code.

More generally, suppose $C$ is a doubly even code of length $n$, with generating set $\{g_1,\ldots,g_k\}$.  Let $(s_1,\ldots,s_k)\in\{+,-\}^k$ be a sequence of plus signs or minus signs.  Define
\[\pi_{(g_1,\ldots,g_k),(s_1,\ldots,s_k)}=\pi_{g_1,s_1}\cdots\pi_{g_k,s_k}.\]
These projections in the definition commute, and so the resulting image by multiplying $\Cl(0,n)$ on the right with this is a subrepresentation.  The code for that subrepresentation will be $C$.

\section{Comparison to Representation theory of Clifford Algebras}
We have now seen that $\zt$-graded signed permutation representations of $\Cl(0,n)$ by signed permutation matrices are related to Cliffordinkras, which are characterized by doubly even binary linear block codes of length $n$.

At the same time, the representation theory of Clifford algebras is well-known.\cite{rLM}  In this section we illustrate the relationship with examples.

Before we begin, $\zt$-graded representations of $\Cl(0,n)$ are classified by ungraded representations of $\Cl(0,n+1)$.  That is because in a $\zt$-graded representation, there is an operator $(-1)^F$ that is $1$ on the bosons and $-1$ on the fermions.  This squares to $1$, and anticommutes with the $\Gamma_1,\ldots,\Gamma_n$, and so if we view $(-1)^F$ as $\Gamma_0$, then $\Gamma_0,\ldots,\Gamma_n$ furnish a representation of $\Cl(0,n+1)$.  Furthermore, $\zt$-graded isomorphisms of representations of $\Cl(0,n)$ are, in this framework, ungraded isomorphisms of representations of $\Cl(0,n+1)$.

As in Section~\ref{sec:cubes}, we have Cliffordinkras with $n$ colors that look like $n$-cubes.  This corresponds to the ``trivial'' code $t_n$, consisting only of the identity element: $t_n=\{00\cdots 0\}$.  In fact, for $n<4$, there are no other doubly even codes (a non-trivial codeword in a doubly even code would have to have weight at least 4, which is impossible if $n<4$).

For $n=1$, the construction in Section~\ref{sec:cubes} gives us a $1$-cube, that is, a line segment, with two vertices and one edge:
\begin{center}
\begin{tikzpicture}[scale=0.15]
\GraphInit[vstyle=Welsh]
\SetVertexNormal[MinSize=5pt]
\SetUpEdge[labelstyle={draw},style={ultra thick}]
\tikzset{Dash/.style={dashed,draw}}
\Vertex[x=0,y=0,Math,L={1},Lpos=180]{B}
\Vertex[x=20,y=0,Math,L={\Gamma_1},Lpos=0]{F}
\AddVertexColor{black}{F}
\Edge(B)(F)
\end{tikzpicture}
\end{center}
If we order the vertices so that $B$ comes first, this describes a two dimensional real representation, with
\[\Gamma_1=\begin{bmatrix}
0&1\\
1&0
\end{bmatrix}.\]
We also have
\[\Gamma_0=(-1)^F=\begin{bmatrix}
1&0\\
0&-1
\end{bmatrix}.\]
As mentioned above, $\zt$-graded representations of $\Cl(0,1)$ are ungraded representations of $\Cl(0,2)$, which according to Ref.~\cite{rLM}, for instance, is isomorphic to $\re(2)$, that is, $2\times 2$ real matrices.

The smallest real representation is the natural $2$-dimensional representation treating these matrices as themselves.  Thus we have agreement between the Cliffordinkra analysis and the classical representation theory.

For $n=2$, the Cliffordinkra is a square
\begin{center}
\begin{tikzpicture}[scale=0.15]
\GraphInit[vstyle=Welsh]
\SetVertexNormal[MinSize=5pt]
\SetUpEdge[labelstyle={draw},style={ultra thick}]
\tikzset{Dash/.style={dashed,draw,ultra thick}}
\Vertex[x=0,y=0,Math,L={\Gamma_{00}},Lpos=180]{B}
\Vertex[x=20,y=0,Math,L={\Gamma_{10}},Lpos=0]{F}
\Vertex[x=0,y=-20,Math,L={\Gamma_{01}},Lpos=180]{C}
\Vertex[x=20,y=-20,Math,L={\Gamma_{11}},Lpos=0]{G}
\AddVertexColor{black}{F,C}
\Edge(B)(F)
\Edge(C)(G)
\Edge[color=red](B)(C)
\Edge[color=red,style=Dash](F)(G)
\end{tikzpicture}
\end{center}
which, depending on the ordering of the vertices, results in
\[\Gamma_1=\begin{bmatrix}
0&0&1&0\\
0&0&0&1\\
1&0&0&0\\
0&1&0&0\end{bmatrix},\quad\quad
\Gamma_1=\begin{bmatrix}
0&0&0&1\\
0&0&-1&0\\
0&1&0&0\\
-1&0&0&0\end{bmatrix}.\]
From the Clifford algebra classification standpoint, $\zt$-graded representations of $\Cl(0,2)$ are representations of $\Cl(0,3)\cong\bc(2)$.  The smallest real representation comes from the natural complex representation of $\bc(2)$ as complex matrices, then viewing a complex number as two real numbers (the real and imaginary parts).  This gives rise to a 4-dimensional representation, again, in agreement with what was seen with Cliffordinkras.

For $n=3$, the Cliffordinkra is a cube and thus gives an $8$-dimensional $\zt$-graded representation of $\Cl(0,3)$.  This is equivalent to a representation of $\Cl(0,4)\cong\bh(2)$.  If we take the standard representation of this as a quaternionic matrix, and treat every quaternion as four real numbers, then we get an $8$-dimensional real $\zt$-graded representation, in agreement with what Cliffordinkras provides.  The reader can use whatever ordering of the cube she prefers to get the $\Gamma_i$ matrices.

When $n=4$, we could take the trivial code $t_4$ again, but for the first time we have another option: the code $d_4=\{0000,1111\}$, generated by $1111$.  This is a code of dimension 1, described above in Section~\ref{sec:cliffordcube}, and the Cliffordinkra has $8$ vertices.  This thus provides a real $\zt$-graded representation of $\Cl(0,4)$, or a representation of $\Cl(0,5)\cong\bh(2)\oplus\bh(2)$.  There are two irreducible representations, given by each of these factors.  They again lead to $8$ dimensional real representations.  These correspond to the images of $\pi_{1111,+}$ and $\pi_{1111,-}$ given in Section~\ref{sec:cliffordcube}.

If we had taken the trivial code $t_4$ instead, the Cliffordinkra would be a $4$-cube, and the representation would not be minimal (it would be a direct sum of the two different irreducibles).  The relationship to the direct sum is not given by a signed permutation, but involves taking linear combinations of vertices.  This is essentially in Section~\ref{sec:cliffordcube} the way we broke $\Gamma_y$ into $a_y$ and $b_y$.

More generally, if we want irreducible representations of $\Cl(0,n)$ we should look for maximal doubly even codes, that is, of the highest dimension.

  When $n=5$, it is possible to have a codeword with four $1$s and one $0$, like $11110$, but it is impossible to have two such, because their bitwise and would have odd weight.  So we can have a generator set $11110$, or we could have the result of permuting the columns.

Given a binary block code, we can generate a new code by permuting the columns.  This procedure preserves linearity, evenness, and doubly evenness.  Two codes related by this procedure are called {\em permutation equivalent}.  So for $n=5$, there is only one doubly even linear code up to permutation equivalence: $\{00000,11110\}$.  Others, like $\{00000,01111\}$ are permutation equivalent to it.

Given two binary block codes $G$ and $H$, of length $m$ and $n$, respectively, we can generate the code $G\oplus H$, of length $m+n$, consisting of all strings resulting from concatening a codeword of $G$ with a codeword of $H$.  If $G$ and $H$ are linear, then so is $G\oplus H$.  If they are doubly even, then so is $G\oplus H$.  The code $\{00000,11110\}$ is $d_4\oplus t_1$.  And $t_n=t_1{}^{\oplus n}$.

So for $n=5$, we have $d_4\oplus t_1$ of dimension $k=1$, and the Cliffordinkra has $2^{n-k}=16$ vertices.  The $\zt$-graded representations of $\Cl(0,5)$ are representations of $\Cl(0,6)=\bh(4)$, and the natural representation of this, interpreting quaternions as four real numbers, gives a 16 dimensional real representation.

For $n=6$, the maximal doubly even code is called $d_6$.  It is dimension $k=2$, meaning we have two generators: $111100$ and $001111$.  It is common to record these generators as rows of a matrix, called the generating matrix:
\[d_6:\begin{bmatrix}
1&1&1&1&0&0\\
0&0&1&1&1&1
\end{bmatrix}\]

Since $k=2$, the quotient, the Cliffordinkra, has $2^{n-k}=16$ vertices.  The natural representation of $\Cl(0,7)=\bc(8)$ is using $8$ complex dimensions, and if we view a complex number as two real numbers, this gives $16$ real dimensions.

More generally, if $n$ is even and at least 4, we define $d_n$ to be the code of length $n$ and dimension $d/2-1$, with generating matrix
\[d_n:\begin{bmatrix}
1&1&1&1&0&0&\cdots&0&0&0\\
0&0&1&1&1&1&\cdots&0&0&0\\
0&0&0&0&1&1&\cdots&0&0&0\\
\vdots&&&&&&&&&\vdots\\
0&0&0&0&0&0&\cdots&1&0&0\\
0&0&0&0&0&0&\cdots&1&1&1
\end{bmatrix}.\]

For $n=7$ the maximal doubly even code has dimension $3$, and is given by the following generating matrix:
\[e_7:\begin{bmatrix}
1&1&1&1&0&0&0\\
0&0&1&1&1&1&0\\
1&0&1&0&1&0&1
\end{bmatrix}.\]

The Cliffordinkra here has $2^{n-k}=16$ vertices, and the natural representation of $\Cl(0,8)=\re(16)$ is 16 dimensional.

For $n=8$, the maximal code, $e_8$, has dimension $k=4$, and the following generating matrix:
\[e_8:\begin{bmatrix}
1&1&1&1&0&0&0&0\\
0&0&1&1&1&1&0&0\\
0&0&0&0&1&1&1&1\\
1&0&1&0&1&0&1&0
\end{bmatrix}.\]

The Cliffordinkra here has $2^{n-k}=16$ vertices, and there are two natural representations of $\Cl(0,9)=\re(16)\oplus \re(16)$, based on which factor is used.  Both can be obtained using projections as in Section~\ref{sec:cliffordcube}, but with different choices of $+$ or $-$ signs.  The details are below in Section~\ref{sec:cc}.

The naming of these codes comes from the fact that in $n$ dimensions, the set of lattice points $(x_1,\ldots,x_n)$ in $\zz^n$ with $(x_1\bmod{2},\ldots,x_n\bmod{2})$ being in $d_n$ (resp. $e_7$, $e_8$) is the lattice $d_n$ (resp. $e_7$, $e_8$), and the set of nonzero points that are closest to the origin forms the root lattice for the Lie groups $D_N$ (resp. $E_7$, $E_8$).

For these values of $n$, here is a table with the maximum code dimension, the number of vertices (which is the dimension of the representation from the Cliffordinkra), the $\Cl(0,n+1)$, and the minimal representation of this Clifford algebra:
\begin{center}
\begin{tabular}{c|c|c|c|c|c}
$n$&dim. code&max code&num. vert.&$\Cl(0,n+1)$&min. rep.\\\hline
1&0&$t_1$&2&$\re(2)$&$\re^2$\\
2&0&$t_2$&4&$\bc(2)$&$\bc^2$\\
3&0&$t_3$&8&$\bh(2)$&$\bh^2$\\
4&1&$d_4$&8&$\bh(2)\oplus\bh(2)$&$\bh^2$\\
5&1&$d_4\oplus t_1$&16&$\bh(4)$&$\bh^4$\\
6&2&$d_6$&16&$\bc(8)$&$\bc^8$\\
7&3&$e_7$&16&$\re(16)$&$\re^{16}$\\
8&4&$e_8$&16&$\re(16)\oplus\re(16)$&$\re^{16}$\\
\end{tabular}
\end{center}

For $n>8$, the best we can do is achieved by taking the direct sum of $e_8$ and whatever we had for length $n-8$.  For instance, the table continues:
\begin{center}
\begin{tabular}{c|c|c|c|c|c}
$n$&dim. code&max code&num. vert.&$\Cl(0,n+1)$&min. rep.\\\hline
9&4&$e_8\oplus t_1$&32&$\re(32)$&$\re^{32}$\\
10&4&$e_8\oplus t_2$&64&$\bc(32)$&$\bc^{32}$\\
11&4&$e_8\oplus t_3$&128&$\bh(32)$&$\bh^{32}$\\
12&5&$e_8\oplus d_4$&128&$\bh(32)\oplus\bh(32)$&$\bh^{32}$\\
13&5&$e_8\oplus d_4\oplus t_1$&256&$\bh(64)$&$\bh^{64}$\\
14&6&$e_8\oplus d_6$&256&$\bc(128)$&$\bc^{128}$\\
15&7&$e_8\oplus e_7$&256&$\re(256)$&$\re^{256}$\\
16&8&$e_8\oplus e_8$&256&$\re(256)\oplus\re(256)$&$\re^{256}$\\
\end{tabular}
\end{center}
This procedure continues for every block of $8$ values of $n$, in accordance with the familiar period-8 Bott Periodicity.\cite{rLM}

There are indeed other codes that are not of the form $e_8$ direct sum some smaller length code: for instance, $d_{10}$ is also dimension 4.  But these achieve the highest dimension possible.  We do not prove that here, but to get a sense for how this is done, here is a sketch.  First show that these are maximal in the sense that it is impossible to add another generator while maintaining doubly evenness.  Then show, using techniques analogous to matroid theory, that any two doubly even linear codes of a given length have the same dimension.

The classification of doubly even codes is a hard problem, because the number of codes grows quickly with $n$.

In 1996, Philippe Gaborit gave a count of all doubly even binary linear codes of length $n$ and dimension $k$, not up to permutation equivalence.\cite{rPGMass}  His formula is a bit long, and you can look it up there, but it is telling that it depends on $n\bmod{8}$.  This formula can also be used to calculate at what dimension $k$ the number of codes is $0$, and this matches the observation above.

The fact that the minimal Adinkra (obtained by a maximal code) for every $n$ gives the same dimension as that obtained from the irreducible representation of $\Cl(0,n+1)$ indicates that $\zt$-graded representations of $\Cl(0,n+1)$ can always be achieved by signed permutation matrices.

\section{Cubical Cohomology}
\label{sec:cc}
The study of the dashings on Cliffordinkras turns out to use the language of cubical cohomology from algebraic topology.

A Cliffordinkra, being a graph, contains vertices and edges.  Furthermore, there is a special role played by the quadrilateral bicolor cycles, because these are what define whether a dashing is totally odd.  We might imagine $2$ dimensional squares being glued for each quadrilateral bicolor cycle.  An analogous idea gives rise to $3$-cells.  The result is a cubical cell complex.

A $k$-cochain is a function from the set of $k$-cells to $\zt$.  The set of $k$-cochains is written $C^k(A)$.  This is a vector space over the field $\gf_2$.  There is a linear map $d_k:C^k(A)\to C^{k+1}(A)$ defined as follows: if $\mu\in C^k(A)$ and $\sigma$ is a $k+1$-cell, then  $d_k\mu(\sigma)=\mu(\partial \sigma)$, where $\partial\sigma$ is the formal sum of the cells in the boundary of $\sigma$.  As in any cell complex, $d_{k+1}\circ d_k=0$.  

A dashing is a specification of which edges are solid and which are dashed.  If we view solid as $0\in\zt$ and dashed as $1\in\zt$, then a dashing is an assignment of $0$ or $1$ to every edge.  Thus, a dashing is really a $1$-cochain in $C^1(A)$.  The condition that a dashing $\mu$ is totally odd is that for every $2$-cell $\sigma$, we have $\mu(\partial\sigma)=1\pmod{2}$.  This means $d_1\mu$ must be the $2$-cocycle $1_2\in C^2(A)$ whose value on every $2$-cell is $1$.

The study of the solutions to the equation $d_1\mu=1_2$ involves the first and second cohomology groups.  In general, the $k$th cohomology group, written $H^k(A;\zt)$, which is defined to be the kernel of $d_k:C^k(A)\to C^{k+1}(A)$ modulo the image of $d_{k-1}:C^{k-1}(A)\to C^k(A)$.  This is given by topological information in the cubical cell complex.

Whether the equation $d_1\mu=1_2$ has any solutions at all first requires that $d_2 1_2=0$.  For any $3$-cell $\sigma$, $d_21_2(\sigma)=0$ because every $3$-cell is a cube and has six faces, and six is even.  Whether or not $d_1\mu=1_2$ has any solutions then depends on viewing $1_2$ in $H^2(A;\zt)$ and determining if it is $0$.  For connected $A$, this turns out to be the case precisely when the code for $A$ is doubly even.  This is an independent verification of the fact that totally odd dashings occur precisely when the code is doubly even.

If totally odd dashings exist, we might ask how many there are.  If $\mu_1$ and $\mu_2$ are two totally odd dashings, then $d_1\mu_1=d_1\mu_2$, so $\mu_1-\mu_2$ is in the kernel of $d_1$.  In other words, the set of totally odd dashings is an affine space in $C^1(A)$ modeled on the kernel of $d_1$.

We now analyze vertex switching, which was introduced in Section~\ref{sec:vertexswitch}.  Let $S$ be a set of vertices.  This defines an indicator function $1_S(v)\in C^0(A)$ that is defined to be $1$ when $v\in S$ and $0$ otherwise.  If $\mu\in C^1(A)$ is a dashing, then $\mu+d_0(1_S)$ is the result of modifying $\mu$ by a vertex switch on $S$.  Therefore, the set of totally odd dashings up to vertex switching is an affine space modeled on the kernel of $d_1$ modulo the image of $d_0$; i.e., the first cohomology group $H^1(A;\zt)$.

This cohomology group $H^1(A;\zt)$ can be computed for connected Cliffordinkras, and it is isomorphic to the code of the Cliffordinkra.  Indeed, this is related to the $\pm$ choices in $\pi_{(g_1,\ldots,g_k),(\pm,\ldots,\pm)}$ in Section~\ref{sec:cliffordcube}.  This can be computed combinatorially but the topological intuition is this: the cube is contractible but when we quotient by a code $C$ we have $C$ as our fundamental group.  By the Hurewicz Isomorphism Theorem and the Universal Coefficient Theorem, we can compute $H^1(A;\zt)$.  The quotienting process can have fixed points but these are in cells of dimension 4 and higher, so that does not affect $H^1$.


There is an analogous phenomenon with the bipartite structure: let $1_1$ be the $1$-cocycle that assigns $1$ to every $1$-cell.  The assignment of bosons and fermions is a $0$-cocycle $f$, so that $d_0f=1_1$.  This exists if $1_1$ is $0$ in $H^1(A;\zt)$, which in turn is the case if the code for each component of $A$ is even.  The ways to do this is an affine space modeled on $H^0(A;\zt)$, a $\zt$ vector space whose dimension is the number of components of $A$.

The details of all of this is done in Ref.~\cite{cc}.

\section{Geometrization}
\label{sec:geom}
The situation in the previous section, of the existence of a totally odd dashing depending on whether a certain $2$-cocycle is $0$ in $H^2(A;\zt)$, and the set of such, modulo sign reversals, being an affine space modeled on $H^1(A;\zt)$, is reminiscent of a similar story in differential topology.  A manifold $M$ has a spin structure if its second Stiefel--Whitney class $w_2$ is $0$ in $H^2(M;\zt)$, and the set of spin structures is given by an affine space modeled on $H^1(M;\zt)$.\cite{rMS}  It turns out that there is a sense in which totally odd dashings give rise to spin structures on surfaces, explained in Refs.~\cite{geom1,geom2}, which we summarize here.  This procedure is called {\em geometrization}.

A {\em rainbow} is a cyclic ordering of the colors in the Cliffordinkra.  Given a rainbow, we can glue in squares only for those bicolor cycles that involve consecutive colors in the rainbow.  If this is done, the result is a surface.  For instance, for an $n$-cube modulo a code of dimension $k$, we get $2^{n-k}$ vertices, $n2^{n-k-1}$ edges, and $n2^{n-k-2}$ two-cells.  An Euler characteristic calculation shows that we get that the surface has genus $1+(n-4)2^{n-k-3}$.

When $n=3$, this predicts a sphere, which corresponds to the $3$-dimensional cube that is the Cliffordinkra for $n=3$.  When $n=4$, no matter if we take $k=0$ for the code $t_4$ or $k=1$ for the code $d_4$, we have genus 1.  This can be seen by drawing the a projection of the hypercube and following the construction, as seen in Figure~\ref{fig:torus}.

\newcommand{\scr}{.7}
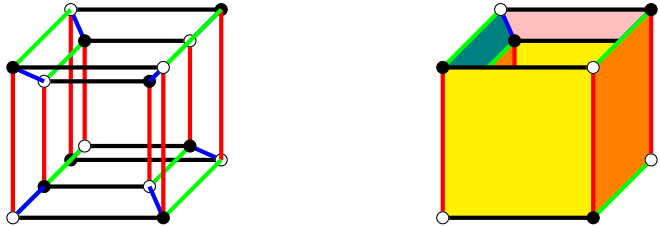
\begin{figure}[h]
\begin{center}
\begin{tikzpicture}[ultra thick]
\draw[color=red] (xyz cs:x=-1,y=-1,z=-1)--(xyz cs:x=-1,y=1,z=-1);
\draw[color=black] (xyz cs:x=-1,y=-1,z=-1)--(xyz cs:x=1,y=-1,z=-1);
\draw[color=black] (xyz cs:x=-1,y=1,z=-1)--(xyz cs:x=1,y=1,z=-1);
\draw[color=blue] (xyz cs:x=-1,y=-1,z=-1)--(xyz cs:x=-\scr,y=-\scr,z=-\scr);
\draw[color=blue] (xyz cs:x=1,y=-1,z=-1)--(xyz cs:x=\scr,y=-\scr,z=-\scr);
\draw[color=blue] (xyz cs:x=-1,y=1,z=-1)--(xyz cs:x=-\scr,y=\scr,z=-\scr);
\draw[color=blue] (xyz cs:x=1,y=1,z=-1)--(xyz cs:x=\scr,y=\scr,z=-\scr);
\draw[color=black] (xyz cs:x=-\scr,y=-\scr,z=-\scr)--(xyz cs:x=\scr,y=-\scr,z=-\scr);
\draw[color=red] (xyz cs:x=-\scr,y=-\scr,z=-\scr)--(xyz cs:x=-\scr,y=\scr,z=-\scr);
\draw[color=black] (xyz cs:x=-\scr,y=\scr,z=-\scr)--(xyz cs:x=\scr,y=\scr,z=-\scr);
\draw[color=red] (xyz cs:x=\scr,y=-\scr,z=-\scr)--(xyz cs:x=\scr,y=\scr,z=-\scr);
\filldraw[white,thin] (xyz cs:x=1,y=-1,z=-1) circle (.08);
\draw[black,thin] (xyz cs:x=1,y=-1,z=-1) circle (.08);
\filldraw[black,thin] (xyz cs:x=-1,y=-1,z=-1) circle (.08);
\filldraw[white,thin] (xyz cs:x=-1,y=1,z=-1) circle (.08);
\draw[black,thin] (xyz cs:x=-1,y=1,z=-1) circle (.08);
\filldraw[black,thin] (xyz cs:x=1,y=1,z=-1) circle (.08);
\draw[color=green] (xyz cs:x=-\scr,y=-\scr,z=-\scr)--(xyz cs:x=-\scr,y=-\scr,z=\scr);
\draw[color=green] (xyz cs:x=\scr,y=-\scr,z=-\scr)--(xyz cs:x=\scr,y=-\scr,z=\scr);
\draw[color=green] (xyz cs:x=-\scr,y=\scr,z=-\scr)--(xyz cs:x=-\scr,y=\scr,z=\scr);
\draw[color=green] (xyz cs:x=\scr,y=\scr,z=-\scr)--(xyz cs:x=\scr,y=\scr,z=\scr);
\filldraw[white,thin] (xyz cs:x=-\scr,y=-\scr,z=-\scr) circle (.08);
\draw[black,thin] (xyz cs:x=-\scr,y=-\scr,z=-\scr) circle (.08);
\filldraw[white,thin] (xyz cs:x=\scr,y=\scr,z=-\scr) circle (.08);
\draw[black,thin] (xyz cs:x=\scr,y=\scr,z=-\scr) circle (.08);
\filldraw[black,thin] (xyz cs:x=-\scr,y=\scr,z=-\scr) circle (.08);
\filldraw[black,thin] (xyz cs:x=\scr,y=-\scr,z=-\scr) circle (.08);
\draw[color=red] (xyz cs:x=1,y=-1,z=-1)--(xyz cs:x=1,y=1,z=-1);
\draw[color=green] (xyz cs:x=-1,y=-1,z=-1)--(xyz cs:x=-1,y=-1,z=1);
\draw[color=green] (xyz cs:x=1,y=-1,z=-1)--(xyz cs:x=1,y=-1,z=1);
\draw[color=green] (xyz cs:x=-1,y=1,z=-1)--(xyz cs:x=-1,y=1,z=1);
\draw[color=green] (xyz cs:x=1,y=1,z=-1)--(xyz cs:x=1,y=1,z=1);
\draw[color=black] (xyz cs:x=-\scr,y=-\scr,z=\scr)--(xyz cs:x=\scr,y=-\scr,z=\scr);
\draw[color=red] (xyz cs:x=-\scr,y=-\scr,z=\scr)--(xyz cs:x=-\scr,y=\scr,z=\scr);
\draw[color=black] (xyz cs:x=-\scr,y=\scr,z=\scr)--(xyz cs:x=\scr,y=\scr,z=\scr);
\draw[color=red] (xyz cs:x=\scr,y=-\scr,z=\scr)--(xyz cs:x=\scr,y=\scr,z=\scr);
\filldraw[white,thin] (xyz cs:x=\scr,y=-\scr,z=\scr) circle (.08);
\draw[black,thin] (xyz cs:x=\scr,y=-\scr,z=\scr) circle (.08);
\filldraw[white,thin] (xyz cs:x=-\scr,y=\scr,z=\scr) circle (.08);
\draw[black,thin] (xyz cs:x=-\scr,y=\scr,z=\scr) circle (.08);
\filldraw[black,thin] (xyz cs:x=\scr,y=\scr,z=\scr) circle (.08);
\filldraw[black,thin] (xyz cs:x=-\scr,y=-\scr,z=\scr) circle (.08);
\draw[color=blue] (xyz cs:x=-1,y=-1,z=1)--(xyz cs:x=-\scr,y=-\scr,z=\scr);
\draw[color=blue] (xyz cs:x=1,y=-1,z=1)--(xyz cs:x=\scr,y=-\scr,z=\scr);
\draw[color=blue] (xyz cs:x=-1,y=1,z=1)--(xyz cs:x=-\scr,y=\scr,z=\scr);
\draw[color=blue] (xyz cs:x=1,y=1,z=1)--(xyz cs:x=\scr,y=\scr,z=\scr);
\draw[color=black] (xyz cs:x=-1,y=-1,z=1)--(xyz cs:x=1,y=-1,z=1);
\draw[color=red] (xyz cs:x=-1,y=-1,z=1)--(xyz cs:x=-1,y=1,z=1);
\draw[color=black] (xyz cs:x=-1,y=1,z=1)--(xyz cs:x=1,y=1,z=1);
\draw[color=red] (xyz cs:x=1,y=-1,z=1)--(xyz cs:x=1,y=1,z=1);
\filldraw[white,thin] (xyz cs:x=-1,y=-1,z=1) circle (.08);
\draw[black,thin] (xyz cs:x=-1,y=-1,z=1) circle (.08);
\filldraw[black,thin] (xyz cs:x=1,y=-1,z=1) circle (.08);
\filldraw[white,thin] (xyz cs:x=1,y=1,z=1) circle (.08);
\draw[black,thin] (xyz cs:x=1,y=1,z=1) circle (.08);
\filldraw[black,thin] (xyz cs:x=-1,y=1,z=1) circle (.08);
\end{tikzpicture}
\makebox[1in]{}
\begin{tikzpicture}
\filldraw[color=pink] (xyz cs:x=-1,y=1,z=-1)--(xyz cs:x=1,y=1,z=-1)--(xyz cs:x=\scr,y=\scr,z=-\scr)--(xyz cs:x=-\scr,y=\scr,z=-\scr)--(xyz cs:x=-1,y=1,z=-1);
\filldraw[color=yellow] (xyz cs:x=-\scr,y=\scr,z=-\scr)--(xyz cs:x=\scr,y=\scr,z=-\scr)--(xyz cs:x=\scr,y=-\scr,z=-\scr)--(xyz cs:x=-\scr,y=-\scr,z=-\scr)--(xyz cs:x=-\scr,y=\scr,z=-\scr);
%
\draw[color=red,ultra thick] (xyz cs:x=-1,y=-1,z=-1)--(xyz cs:x=-1,y=1,z=-1);
\filldraw[color=teal] (xyz cs:x=-1,y=1,z=-1)--(xyz cs:x=-1,y=1,z=1)--(xyz cs:x=-\scr,y=\scr,z=\scr)--(xyz cs:x=-\scr,y=\scr,z=-\scr)--(xyz cs:x=-1,y=1,z=-1);
\filldraw[color=orange] (xyz cs:x=-\scr,y=\scr,z=-\scr)--(xyz cs:x=-\scr,y=\scr,z=\scr)--(xyz cs:x=-\scr,y=-\scr,z=\scr)--(xyz cs:x=-\scr,y=-\scr,z=-\scr)--(xyz cs:x=-\scr,y=\scr,z=-\scr);
%
\draw[color=black,ultra thick] (xyz cs:x=-1,y=-1,z=-1)--(xyz cs:x=1,y=-1,z=-1);
\draw[color=black,ultra thick] (xyz cs:x=-1,y=1,z=-1)--(xyz cs:x=1,y=1,z=-1);
\draw[color=blue,ultra thick] (xyz cs:x=-1,y=-1,z=-1)--(xyz cs:x=-\scr,y=-\scr,z=-\scr);
\draw[color=blue,ultra thick] (xyz cs:x=1,y=-1,z=-1)--(xyz cs:x=\scr,y=-\scr,z=-\scr);
\draw[color=blue,ultra thick] (xyz cs:x=-1,y=1,z=-1)--(xyz cs:x=-\scr,y=\scr,z=-\scr);
\draw[color=blue,ultra thick] (xyz cs:x=1,y=1,z=-1)--(xyz cs:x=\scr,y=\scr,z=-\scr);
\draw[color=black,ultra thick] (xyz cs:x=-\scr,y=-\scr,z=-\scr)--(xyz cs:x=\scr,y=-\scr,z=-\scr);
\draw[color=red,ultra thick] (xyz cs:x=-\scr,y=-\scr,z=-\scr)--(xyz cs:x=-\scr,y=\scr,z=-\scr);
\draw[color=black,ultra thick] (xyz cs:x=-\scr,y=\scr,z=-\scr)--(xyz cs:x=\scr,y=\scr,z=-\scr);
\draw[color=red,ultra thick] (xyz cs:x=\scr,y=-\scr,z=-\scr)--(xyz cs:x=\scr,y=\scr,z=-\scr);
\draw[color=green,ultra thick] (xyz cs:x=-\scr,y=-\scr,z=-\scr)--(xyz cs:x=-\scr,y=-\scr,z=\scr);
\draw[color=green,ultra thick] (xyz cs:x=\scr,y=-\scr,z=-\scr)--(xyz cs:x=\scr,y=-\scr,z=\scr);
\draw[color=green,ultra thick] (xyz cs:x=-\scr,y=\scr,z=-\scr)--(xyz cs:x=-\scr,y=\scr,z=\scr);
\draw[color=green,ultra thick] (xyz cs:x=\scr,y=\scr,z=-\scr)--(xyz cs:x=\scr,y=\scr,z=\scr);
\filldraw[color=orange] (xyz cs:x=1,y=1,z=-1)--(xyz cs:x=1,y=1,z=1)--(xyz cs:x=1,y=-1,z=1)--(xyz cs:x=1,y=-1,z=-1)--(xyz cs:x=1,y=1,z=-1);
%
\draw[color=red,ultra thick] (xyz cs:x=1,y=-1,z=-1)--(xyz cs:x=1,y=1,z=-1);
\draw[color=green,ultra thick] (xyz cs:x=-1,y=-1,z=-1)--(xyz cs:x=-1,y=-1,z=1);
\draw[color=green,ultra thick] (xyz cs:x=1,y=-1,z=-1)--(xyz cs:x=1,y=-1,z=1);
\draw[color=green,ultra thick] (xyz cs:x=-1,y=1,z=-1)--(xyz cs:x=-1,y=1,z=1);
\draw[color=green,ultra thick] (xyz cs:x=1,y=1,z=-1)--(xyz cs:x=1,y=1,z=1);
\draw[color=black,ultra thick] (xyz cs:x=-\scr,y=-\scr,z=\scr)--(xyz cs:x=\scr,y=-\scr,z=\scr);
\draw[color=red,ultra thick] (xyz cs:x=-\scr,y=-\scr,z=\scr)--(xyz cs:x=-\scr,y=\scr,z=\scr);
\draw[color=black,ultra thick] (xyz cs:x=-\scr,y=\scr,z=\scr)--(xyz cs:x=\scr,y=\scr,z=\scr);
\draw[color=red,ultra thick] (xyz cs:x=\scr,y=-\scr,z=\scr)--(xyz cs:x=\scr,y=\scr,z=\scr);
\draw[color=blue,ultra thick] (xyz cs:x=-1,y=-1,z=1)--(xyz cs:x=-\scr,y=-\scr,z=\scr);
\draw[color=blue,ultra thick] (xyz cs:x=1,y=-1,z=1)--(xyz cs:x=\scr,y=-\scr,z=\scr);
\draw[color=blue,ultra thick] (xyz cs:x=-1,y=1,z=1)--(xyz cs:x=-\scr,y=\scr,z=\scr);
\draw[color=blue,ultra thick] (xyz cs:x=1,y=1,z=1)--(xyz cs:x=\scr,y=\scr,z=\scr);
\filldraw[color=yellow] (xyz cs:x=-1,y=1,z=1)--(xyz cs:x=1,y=1,z=1)--(xyz cs:x=1,y=-1,z=1)--(xyz cs:x=-1,y=-1,z=1)--(xyz cs:x=-1,y=1,z=1);
%
\draw[color=black,ultra thick] (xyz cs:x=-1,y=-1,z=1)--(xyz cs:x=1,y=-1,z=1);
\draw[color=red,ultra thick] (xyz cs:x=-1,y=-1,z=1)--(xyz cs:x=-1,y=1,z=1);
\draw[color=black,ultra thick] (xyz cs:x=-1,y=1,z=1)--(xyz cs:x=1,y=1,z=1);
\draw[color=red,ultra thick] (xyz cs:x=1,y=-1,z=1)--(xyz cs:x=1,y=1,z=1);
\filldraw[black,thin] (xyz cs:x=-\scr,y=\scr,z=-\scr) circle (.08);
\filldraw[white,thin] (xyz cs:x=-1,y=1,z=-1) circle (.08);
\draw[black,thin] (xyz cs:x=-1,y=1,z=-1) circle (.08);
\filldraw[black,thin] (xyz cs:x=1,y=1,z=-1) circle (.08);
\filldraw[white,thin] (xyz cs:x=1,y=-1,z=-1) circle (.08);
\draw[black,thin] (xyz cs:x=1,y=-1,z=-1) circle (.08);
\filldraw[white,thin] (xyz cs:x=-1,y=-1,z=1) circle (.08);
\draw[black,thin] (xyz cs:x=-1,y=-1,z=1) circle (.08);
\filldraw[black,thin] (xyz cs:x=1,y=-1,z=1) circle (.08);
\filldraw[white,thin] (xyz cs:x=1,y=1,z=1) circle (.08);
\draw[black,thin] (xyz cs:x=1,y=1,z=1) circle (.08);
\filldraw[black,thin] (xyz cs:x=-1,y=1,z=1) circle (.08);
\end{tikzpicture}
\end{center}
\caption{Left: the hypercube, an Adinkra with $n=4$, $k=0$.  Right: the torus obtained using the rainbow: black, red, green, blue.}
\label{fig:torus}
\end{figure}

A totally odd dashing on the edges can be modified into a Kasteleyn orientation, which directs the edges using arrows.  The work of Cimasoni and Reshetikhin show that  the proprety of being totally odd results in a spin structure on the surface.

It is also interesting that these surfaces can be viewed as Riemann surfaces, and have a description as an algebraic curve over the complex numbers.  These come with a projection to the Riemann sphere so that the bosons are all preimages of $0$, the fermions are all images of $1$, and the preimage of the unit interval $[0,1]$ in the positive real axis in the complex plane has as preimage the Adinkra.  Thus, the Adinkra is then a Dessin d'enfant.\cite{girondo}

\section{Other signatures of Clifford algebras}
The same idea can be used to analyze $\Cl(n,0)$, and indeed, $\Cl(p,q)$, but there is a bit more involved.  Since $\Gamma_i$ would square not to $I$ but to $-I$, if $\Gamma_i v=w$, then $\Gamma_i w=-v$.  So if there is an edge between $v$ and $w$, we need to decide whether $v$ or $w$ determines whether the edge is solid or dashed.  One way to determine this is to pick the boson, since precisely one of $v$ and $w$ is bosonic.  We declare that the edge of color $i$ from $v$ to $w$ to be solid if $v$ is bosonic and $\Gamma_i v=w$.

For $\Cl(n,0)$, Cliffordinkras are precisely the same.  For $\Cl(p,q)$, we need to identify which colors square to $1$ and which square to $-1$.  Then the Cliffordinkras are almost the same, except that the quadrilaterals involving both kinds of colors must have an even number of dashed edges, not odd.  Connected Cliffordinkras are still quotients of cubes by codes, but since there are two kinds of colors, codewords are split into two substrings: the first $p$ bits and the last $q$ bits.  There are then two weights for a codeword: the weight for the first $p$ bits, and the weight for the last $q$ bits.  The condition is then that these two weights must be equal modulo 4.

\section{Supersymmetry}
\label{sec:susy}
As indicated in the introduction, Adinkras first came about from the study of supersymmetry in subatomic particle physics.  Some particles in nature are bosons, and the others are fermions.\cite{rDF} Supersymmetry involves operators $Q_1,\ldots,Q_n$ that exchange bosons and fermions, and must satisfy relations of the type:
\[\{Q_j,Q_k\}=\sum_\mu C_{jk\mu}\frac{\partial}{\partial x^\mu}\]
for some constants $C_{jk\mu}$.  These constants satisfy certain consistency conditions.  Adinkras come about from considering the case of supersymmetry in one dimension, namely, time, where the relations become\cite{rDF}
\begin{equation}
\{Q_j,Q_k\}=2i\,\delta_{jk} \frac{\partial}{\partial t}.
\label{eqn:susy}
\end{equation}
This is the one dimensional supersymmetry algebra.

The motivation for studying supersymmetry in one dimension is not merely that it is an easy starting point, although it is that.  Supersymmetry in any number of dimensions can be studied from the one dimensional point of view, by ignoring the effects on the spatial directions.  Furthermore, some systems are naturally understood as depending only on time.

Since the 1970s, physicists have been studying representations of the supersymmetry algebra in a somewhat ad hoc manner, often using constructions like superspace.  They discovered that many of their constructions resulted in descriptions of the representations where $Q_i$ of one basis element only involved one other basis element.

Based on this observation, in 2004 physicists Michael Faux and Sylvester James Gates, Jr. invented Adinkras to study supersymmetry in one dimension.\cite{rA}  The idea is that if $B$ is a boson and $F$ is a fermion, we might have
\[Q_j B=F\]
in which case, in light of the supersymmetry algebra (\ref{eqn:susy}), we would also have
\[Q_j F=i\frac{d}{dt} B.\]
In this case we draw an arrow from $B$ to $F$:

\begin{center}
\begin{tikzpicture}[scale=0.15]
\GraphInit[vstyle=Welsh]
\SetVertexNormal[MinSize=5pt]
\SetUpEdge[labelstyle={draw},style={ultra thick,->}]
\tikzset{Dash/.style={dashed,draw}}
\Vertex[x=0,y=0,Math,L={B},Lpos=180]{B}
\Vertex[x=20,y=0,Math,L={F},Lpos=0]{F}
\AddVertexColor{black}{F}
\Edge(B)(F)
\end{tikzpicture}
\end{center}

But the derivative might very well happen on the $Q_j$ acting on the boson:
\[Q_j B=\frac{d}{dt}F\]
And then the supersymmetry algebra (\ref{eqn:susy}) requires
\[Q_j F=i B.\]
The Adinkra then is drawn in this way:
\begin{center}
\begin{tikzpicture}[scale=0.15]
\GraphInit[vstyle=Welsh]
\SetVertexNormal[MinSize=5pt]
\SetUpEdge[labelstyle={draw},style={ultra thick,->}]
\tikzset{Dash/.style={dashed,draw}}
\Vertex[x=0,y=0,Math,L={B},Lpos=180]{B}
\Vertex[x=20,y=0,Math,L={F},Lpos=0]{F}
\AddVertexColor{black}{F}
\Edge(F)(B)
\end{tikzpicture}
\end{center}

The attentive reader might notice that there is also the question of where to put the factor of $i$, but this plays no role in the representation theory because it can be absorbed into the redefinition of the $B$ or $F$; in any case the reality conditions from the physics demand that they be on the action of $Q_j$ on the fermions.

When there are at least two colors, we must have the same number of derivatives in $Q_jQ_k$ as $Q_kQ_j$, since they are negatives.  So there is an additional requirement that around any bicolor cycle, if traversed in some fixed direction, there must be as many arrows going along the path as against it.  For instance:

\begin{center}
\begin{tikzpicture}[scale=0.15]
\GraphInit[vstyle=Welsh]
\SetVertexNormal[MinSize=5pt]
\SetUpEdge[labelstyle={draw},style={ultra thick,->}]
\tikzset{Dash/.style={dashed,draw,ultra thick,->}}
\Vertex[x=0,y=0,Math,L={B},Lpos=180]{B}
\Vertex[x=20,y=0,Math,L={F},Lpos=0]{F}
\Vertex[x=0,y=20,Math,L={C},Lpos=180]{C}
\Vertex[x=20,y=20,Math,L={G},Lpos=0]{G}
\AddVertexColor{black}{F,C}
\Edge(B)(F)
\Edge(C)(G)
\Edge[color=red](B)(C)
\Edge[color=red,style=Dash](F)(G)
\end{tikzpicture}
\end{center}

If we start at $C$ and go clockwise, we go along the arrow to $G$, against the arrow to $F$, against the arrow to $B$, and with the arrow back to $C$.  Likewise if we start at any vertex and go in any fixed direction, there will be two arrows along the path and two against it.

More generally, an Adinkra is like a Cliffordinkra except that the edges are directed, satisfying the condition that that along any quadrilateral, one traverses two arrows going with the path and two arrows going against it.

This added data can be viewed as providing an integer value to each vertex, which can be viewed as the height on the page where the vertex is drawn, where the arrow points from a vertex of one height to a vertex of height one higher.  Such an assignment is called a {\em ranking}, and the set of rankings forms a poset, and relates to the number of three-colorings of the graph.  The study of these rankings is covered in Ref.~\cite{zhang}.

This ranking has physical meaning, in terms of the kinds of units used to describe the particles: an arrow always points from a particle with mass units $[M]^k$ to one with mass unit $[M]^{k+1/2}$.  This means that the ranking can be taken to be the exponent of the mass unit divided by two.

\section{Conclusion}
Diagrams like Cliffordinkras are well suited to understand Clifford algebras because the real representations turn out to be isomorphic to signed permutation representations.  This can be verified case by case up to dimension 8, and then Bott periodicity proves it for higher $n$, but an overall organizing principle responsible for this seems to be lacking at this point.

The relationship between the supersymmetry algebra in one dimension and Clifford algebras then explains why Adinkras are successful in describing representations of the supersymmetry algebra.  The placement of the derivatives need not be in line with the basis given by the signed permutation representation, but for many representations of interest to physicists, they are.  This in turn is because of the mass units, and because there is still enough freedom to choose bases so that finding a basis where the representation is a signed permutation representation and where the derivatives respect this is not uncommon in small examples.  An example where this does not happen is due to Gates, Hallet, H\"ubsch, and Stiffler.\cite{ComplexLinear}

Nevertheless when focused on Clifford algebras, Cliffordinkras allow a confluence of the study of doubly even codes, Clifford algebras, and discrete algebraic topology.

\section{Acknowledgments}
The author was partially supported by the endowment of the Ford Foundation Professorship of Physics at Brown University, and by the U.S. National Science Foundation grant PHY-1315155.

\section{References}
\bibliographystyle{alpha}
\bibliography{susyrefs}

\newcommand{\etalchar}[1]{$^{#1}$}
\def\rasp{\leavevmode\raise.45ex\hbox{$\rhook$}}
\begin{thebibliography}{DFG{\etalchar{+}}08b}

\bibitem[DFG{\etalchar{+}}07]{r6-codes}
Charles~F. Doran, Michael~G. Faux, S.~James {Gates Jr.}, Tristan H{\"u}bsch,
  Kevin~M. Iga, and Gregory~D. Landweber.
\newblock Relating doubly-even error-correcting codes, graphs, and irreducible
  representations of ${N}$-extended supersymmetry.
\newblock In F.~Liu, editor, {\em New Advances in Applied and Computational
  Mathematics}, Hauppauge, NY, 2007. Nova Science Publishers, Inc.

\bibitem[DFG{\etalchar{+}}08a]{Doran:2008rp}
C.~F. Doran, M.~G. Faux, S.~J. Gates, Jr., T.~Hubsch, K.~M. Iga, G.~D.
  Landweber, and R.~L. Miller.
\newblock {{Adinkras} for {Clifford} Algebras, and Worldline Supermultiplets}.
\newblock 2008.

\bibitem[DFG{\etalchar{+}}08b]{at0}
Charles~F. Doran, Michael~G. Faux, S.~James {Gates Jr.}, Tristan H{\"u}bsch,
  Kevin~M. Iga, Gregory~D. Landweber, and Robert Miller.
\newblock Topology types of {A}dinkras and the corresponding representations of
  {$N$}-extended supersymmetry.
\newblock {\em arXiv:0806.0050}, 2008.

\bibitem[DFG{\etalchar{+}}09]{r6-8}
Charles~F. Doran, Michael~G. Faux, S.~James {Gates Jr.}, Tristan H{\"u}bsch,
  Kevin~M. Iga, and Gregory~D. Landweber.
\newblock A superfield for every dash-chromotopology.
\newblock {\em Int.J.Mod.Phys.}, A24:5681--5695, 2009.

\bibitem[DFG{\etalchar{+}}11]{at}
Charles~F. Doran, Michael~G. Faux, S.~James {Gates Jr.}, Tristan H{\"u}bsch,
  Kevin~M. Iga, Gregory~D. Landweber, and Robert Miller.
\newblock Codes and supersymmetry in one dimension.
\newblock {\em Adv. Theor. Math. Phys.}, 15(6):1909--1970, 2011.

\bibitem[DFG{\etalchar{+}}18]{r6--1}
Charles~F. Doran, Michael~G. Faux, S.~James {Gates Jr.}, Tristan H{\"u}bsch,
  Kevin~M. Iga, and Gregory~D. Landweber.
\newblock Off-shell supersymmetry and filtered {C}lifford supermodules.
\newblock {\em Algebras and Representation Theory}, 21(2):375--397, April 2018.

\bibitem[DIL17]{cc}
Charles~F. Doran, Kevin Iga, and Gregory~D. Landweber.
\newblock {An application of Cubical Cohomology to Adinkras and Supersymmetry
  Representations}.
\newblock {\em AIHPD, European Mathematical Society}, 4(3):387--415, 2017.

\bibitem[DILMD15]{geom1}
Charles~F. Doran, Kevin Iga, Gregory Landweber, and Stefan Mendez-Diez.
\newblock Geometrization of {$N$}-extended 1-dimensional supersymmetry algebras
  {I}.
\newblock {\em Advances in Theoretical and Mathematical Physics},
  19(5):1043--1113, 2015.

\bibitem[DILMD18]{geom2}
Charles~F. Doran, Kevin Iga, Gregory Landweber, and Stefan Mendez-Diez.
\newblock Geometrization of {$N$}-extended 1-dimensional supersymmetry algebras
  {II}.
\newblock {\em Advances in Theoretical and Mathematical Physics}, 22:565--613,
  2018.

\bibitem[FG05]{rA}
Michael Faux and S.~James {Gates, Jr.}
\newblock Adinkras: A graphical technology for supersymmetric representation
  theory.
\newblock {\em Phys. Rev. D (3)}, 71:065002, 2005.

\bibitem[Fre99]{rDF}
Daniel~S. Freed.
\newblock {\em Five lectures on supersymmetry}.
\newblock American Mathematical Society, Providence, RI, 1999.

\bibitem[Gab96]{rPGMass}
Philippe Gaborit.
\newblock Mass formulas for self-dual codes over ${Z}_4$ and ${F}_q+u{F}_q$
  rings.
\newblock {\em IEEE Trans. Inform. Theory}, 42(4):1222--1228, 1996.

\bibitem[GGD11]{girondo}
Ernesto Girondo and Gabino Gonzalez-Diez.
\newblock {\em Introduction to Compact Riemann Surfaces and Dessins d'Enfants}.
\newblock London Mathematical Society, 2011.

\bibitem[GHHS12]{ComplexLinear}
S.~James {Gates Jr.}, J.~Hallet, Tristan H\"ubsch, and Kory Stiffler.
\newblock The real anatomy of complex linear superfields.
\newblock {\em International Journal of Modern Physics}, A27:1250143 (22 p),
  2012.

\bibitem[HP03]{rCHVP}
W.~Cary Huffman and Vera Pless.
\newblock {\em Fundamentals of Error-Correcting Codes}.
\newblock Cambridge Univ. Press, 2003.

\bibitem[LM89]{rLM}
H.~Blaine {Lawson, Jr.} and Marie-Louise Michelsohn.
\newblock {\em Spin geometry}, volume~38 of {\em Princeton Mathematical
  Series}.
\newblock Princeton University Press, Princeton, NJ, 1989.

\bibitem[MS74]{rMS}
John Milnor and James Stasheff.
\newblock {\em Characteristic Classes}.
\newblock Princeton University Press, 1974.

\bibitem[Zha14]{zhang}
Yan~{X} Zhang.
\newblock Adinkras for mathematicians.
\newblock {\em Transactions of the American Mathematical Society},
  366(6):3325--3355, 2014.

\end{thebibliography}

\end{document}